\newcommand\ubar[1]{%
  \underaccent{\bar}{#1}}
\newtheorem{mydefinition}{Definition}
\newtheorem{mytheorem}{Theorem}
\newtheorem{myproposition}{Proposition}
\newtheorem{mycorollary}{Corollary}
\newtheorem{mylemma}{Lemma}
\newtheorem{myproblem}{Problem}
\newtheorem{myremark}{Remark}
\newlength{\arrow}
\title{\LARGE \bf
Technical Report:
Timing Abstraction of Perturbed LTI systems with $\mathcal{L}_2$-based Event-Triggering Mechanism}
\author{Arman Sharifi Kolarijani, Manuel Mazo Jr. and Tam\'{a}s Keviczky \thanks{The authors are with the Delft Center for Systems and Control, Delft University of Technology, The Netherlands.\newline {\tt\footnotesize \{a.sharifikolarijani,m.mazo,t.keviczky\}@tudelft.nl}}}
\begin{document}

\maketitle
\thispagestyle{empty}
\pagestyle{empty}

\begin{abstract}

In networked control systems, the advent of event-triggering strategies in the sampling process has resulted in the usage reduction of network capacities, such as communication bandwidth. However, the aperiodic nature of sampling periods generated by event-triggering strategies has hindered the schedulability of such networks. In this study, we propose a framework to construct a timed safety automaton that captures the sampling behavior of perturbed LTI systems with an $\mathcal{L}_2$-based triggering mechanisms proposed in the Literature. In this framework, the state-space is partitioned into a finite number of convex polyhedral cones, each cone representing a discrete mode in the abstracted automaton. Adopting techniques from stability analysis of retarded systems accompanied with a polytopic embedding of time, LMI conditions to characterize the sampling interval associated with each region are derived. Then, using reachability analysis, the transitions in the abstracted automaton are derived.   
\end{abstract}

\section{Introduction}
Wireless networked controlled systems (WNCS's) represent a class of spatially distributed control systems for which the feedback loops are closed via shared communication components possessing limited bandwidth. Several advantages of WNCS’s, such as the ease of maintenance and the flexibility of implementation, make them attractive to industrial environments. Meanwhile, WNCS's are burdened with characteristics, such as limited battery life and communication bandwidth. Under these circumstances, the resource over-utilization caused by (traditional) periodic implementations, the so-called time-driven control (TDC), makes such implementations less appealing for WNCS's.

To address the aforementioned issues, control researchers have proposed \emph{event-driven control} (EDC) strategies that are aperiodic, such as \emph{event-triggered control} (ETC) \cite{Tabuada} and \emph{self-triggered control} (STC) \cite{Velasco}. In EDC strategies, the core idea relies on the fact that the dynamics of the control system during the inter-sample interval determine the next sampling instant to attenuate the usage of resources, particularly the communication bandwidth. In these strategies, control task executions only happen when a pre-specified condition is violated. Such condition is called the \emph{triggering mechanism} (TM). It is derived based on stability and/or performance of the closed-loop system. On the other hand, the \emph{schedulability} of ETC strategies, due to their aperiodic nature, is more arduous compared to TDC strategies. In fact, in TDC strategies, the control and scheduler designs are naturally decoupled via the (pre-defined) fixed sampling period. This phenomenon is called the \emph{separation-of-concerns} in the real-time systems community \cite{dijkstra1982role}. It is worth mentioning that ETC strategies are almost always equipped with a \emph{minimum inter-execution time} (MIET) to prevent the occurrence of \emph{Zeno} behavior in the sampling process. This quantity can be technically used in the synthesis of task scheduling. However, it is a conservative approximation of the lower bound on all the possible generated sampling periods. Thus, such synthesis does not make use of the beneficiary characteristics of ETC strategies in an efficient manner. To address this shortcoming, researchers have proposed another class of approaches, the so-called \emph{co-design} approaches. In this class, the problem of controller and scheduler synthesis for real-time systems is tackled in a unified framework, see e.g.\ feedback modification to task attributes \cite{Buttazzo1998,Caccamo2000}, \cite{Lu2002,Cervin2004}, anytime controllers \cite{Bhattacharya2004,Fontanelli2008_short}, and event-based control and scheduling \cite{AlAreqi2013,AlAreqi2014}. Recently, alternative to the unified frameworks mentioned above, \cite{Arman2015,ArmanTCoN} have proposed a decoupling framework to capture the sampling behavior of LTI systems with ISS-based TM's using timed safety automata (TSA's). 

Generally speaking, TSA is a simplified version of timed automaton (TA) \cite{alur1996timing,henzinger1994symbolic}. It is a powerful tool to model the timing behavior of real-time systems for scheduling purposes since its reachability analysis is decidable \cite{alur1994theory,larsen1997uppaal}. In this study, following the same path as in \cite{Arman2015,ArmanTCoN}, we propose a framework to capture the sampling behavior of perturbed LTI systems with the $\mathcal{L}_2$-based TM proposed by \cite{WangLem2009}. We show that the derived TSA $\varepsilon$-approximately simulates the sampling behavior of the $\mathcal{L}_2$-based ETC system. It is evident that such characterizations can be analyzed independently for scheduling purposes, thus providing a scalable and versatile event-triggered WNCS design procedure.

\section{Preliminaries}
$\mathbb{R}^n$ denotes the $n$-dimensional Euclidean space, $\mathbb{R}^+$ denotes the positive reals. $\mathbb{N}_0$ is the set of nonnegative integers, and $\mathbb{IR}^+$ is the set of all closed intervals $[a,b]$ such that $a,b \in \mathbb{R}^+$ and $a \leq b$. For any set $S$, $2^S$ denotes the set of all subsets of $S$, i.e.\ the power set of $S$. $\mathcal{S}_{m \times n}$ and $\mathcal{S}_n$ are the set of all $m \times n$ real-valued matrices and the set of all $n \times n$ real-valued symmetric matrices, respectively. For a matrix $M$, $M \preceq 0$ (or $M \succeq 0$) means $M$ is a negative (or positive) semidefinite matrix and $M\prec 0$ ($M \succ 0$) indicates $M$ is a negative (positive) definite matrix. $\mathcal{S}_n^+$ is the cone of all $n\times n$ symmetric positive definite matrices. $\lfloor x \rfloor$ indicates the largest integer not greater than $x \in \mathbb{R}$. $|y|$ and $\|M \|$ denote the Euclidean norm of a vector $y \in \mathbb{R}^n$ and the Frobenius norm of a matrix $M \in \mathcal{S}_{m \times n}$, respectively. For a matrix $M \in \mathcal{S}_n$, $\lambda(M)$ and $\lambda_{\max}(M)$ denote the set of eigenvalues and the largest eigenvalue of $M$. Consider two sets $X,Y\subseteq\mathbb{R}^n$, their Minkowski sum is given by $X\oplus Y:= \{x+y| x\in X\mbox{~and~}y \in Y\}$. We state the following known results that will be used in Subsection~\ref{sec::control_pre}.

\begin{mylemma}{(\cite{GuKhCh2003})}
\label{lem_1}
For any real matrices $E$, $G$ and real symmetric positive definite matrix $P$, with compatible dimensions,
\begin{equation*}
EG+G^TE^T \preceq EPE^T+G^TP^{-1}G.
\end{equation*}
\end{mylemma}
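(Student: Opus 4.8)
The plan is to establish the inequality by a completion-of-squares argument, which is the canonical route for bounds of this type. First I would record that since $P$ is real symmetric positive definite it admits a unique symmetric positive definite square root $P^{1/2}$, and that $P^{-1/2} := (P^{1/2})^{-1}$ is again symmetric positive definite with $P^{1/2}P^{1/2} = P$, $P^{-1/2}P^{-1/2} = P^{-1}$, and $P^{1/2}P^{-1/2} = I$. Because $E$, $G$ and $P$ have compatible dimensions, the products $EP^{1/2}$ and $G^{T}P^{-1/2}$ are well defined and of the same size.

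Next I would introduce the matrix $M := EP^{1/2} - G^{T}P^{-1/2}$ and use the elementary fact that $MM^{T} \succeq 0$. Expanding this product and simplifying with the square-root identities above yields
\[
MM^{T} = EPE^{T} - EG - G^{T}E^{T} + G^{T}P^{-1}G \succeq 0,
\]
where the two cross terms collapse to $EG$ and $G^{T}E^{T}$ precisely because $P^{1/2}P^{-1/2} = I$. Rearranging this single matrix inequality gives $EG + G^{T}E^{T} \preceq EPE^{T} + G^{T}P^{-1}G$, which is the assertion of the lemma.

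There is no genuine obstacle here; the argument is a routine manipulation, and the only points that deserve a moment of care are the well-definedness and symmetry of $P^{1/2}$ and $P^{-1/2}$ (so that the expansion of $MM^{T}$ is valid and the cross terms combine as stated) and the compatibility of dimensions. As an alternative one could instead note that $\bigl(\begin{smallmatrix} P & -I \\ -I & P^{-1}\end{smallmatrix}\bigr) \succeq 0$ by a Schur-complement computation and then pre- and post-multiply by $\begin{pmatrix} E & G^{T}\end{pmatrix}$ and its transpose; this leads to the same identity for $MM^{T}$ and the same conclusion.
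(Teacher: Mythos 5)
Your proof is correct. The paper itself does not prove this lemma---it is quoted from the reference \cite{GuKhCh2003} as a known auxiliary result---and your completion-of-squares argument (expanding $MM^{T}\succeq 0$ with $M=EP^{1/2}-G^{T}P^{-1/2}$, the cross terms collapsing via $P^{1/2}P^{-1/2}=I$) is exactly the standard derivation of this bound, so there is nothing to add.
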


\begin{mylemma}{(\cite{Loan1977})}
\label{lem_2}
For all $A \in \mathcal{S}_{n\times n}$, if $\mu(A)= \max \{ \mu \in \mathbb{R}| \; \mu\in \lambda\left(\frac{A^T +A}{2}\right) \}$, then, $|e^{At}| \leq e^{\mu(A)t}$.
\end{mylemma}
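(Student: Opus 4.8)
\emph{Proof idea.} The plan is to reduce the matrix bound to a scalar comparison (Gr\"onwall) argument along trajectories of $\dot x = Ax$ --- i.e.\ the quadratic Lyapunov computation underlying Van Loan's estimate. Fix $x_0 \in \mathbb{R}^n$; if $x_0 = 0$ the inequality $|e^{At}x_0| \le e^{\mu(A)t}|x_0|$ is trivial, so assume $x_0 \neq 0$ and let $x(t) := e^{At}x_0$, the unique solution of $\dot x = Ax$ with $x(0)=x_0$. By uniqueness of solutions $x(t)\neq 0$ for all $t$, hence $v(t) := |x(t)|^2 = x(t)^{T}x(t)$ is strictly positive and continuously differentiable.

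First I would differentiate: $\dot v(t) = \dot x(t)^{T}x(t) + x(t)^{T}\dot x(t) = x(t)^{T}(A^{T}+A)x(t)$. Since $A^{T}+A$ is symmetric, the Rayleigh-quotient bound gives $x^{T}(A^{T}+A)x \le \lambda_{\max}(A^{T}+A)\,|x|^2$, and because $\tfrac{A^{T}+A}{2}$ is symmetric its eigenvalues are real, so $\mu(A) = \lambda_{\max}\!\big(\tfrac{A^{T}+A}{2}\big) = \tfrac12\lambda_{\max}(A^{T}+A)$. Therefore $\dot v(t) \le 2\mu(A)\,v(t)$ for all $t\ge 0$. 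Dividing by $v(t)>0$ yields $\tfrac{d}{dt}\log v(t) \le 2\mu(A)$; integrating from $0$ to $t$ (equivalently, applying the comparison lemma against $\dot w = 2\mu(A)w$, $w(0)=v(0)$) gives $v(t) \le v(0)\,e^{2\mu(A)t}$, i.e.\ $|e^{At}x_0| \le e^{\mu(A)t}|x_0|$. Taking the supremum over $|x_0|=1$ delivers $|e^{At}| \le e^{\mu(A)t}$, the claim.

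An alternative route works directly with $\phi(t) := |e^{At}|$: submultiplicativity of the operator norm gives $\phi(t+h) \le |e^{Ah}|\,\phi(t)$, while $|e^{Ah}|^2 = \lambda_{\max}\big(I + h(A^{T}+A) + h^2 A^{T}A\big) = 1 + h\,\lambda_{\max}(A^{T}+A) + O(h^2)$ gives $|e^{Ah}| = 1 + h\,\mu(A) + O(h^2)$ as $h\downarrow 0$, whence $D^{+}\phi(t) \le \mu(A)\,\phi(t)$ and the same conclusion follows from $\phi(0)=1$. I do not expect a genuine obstacle here; the only points needing care are the strict positivity of $v$ (so the logarithm step is legitimate --- this is why uniqueness of ODE solutions is invoked, or one otherwise phrases the step purely through the comparison lemma), and the fact that $|\cdot|$ on matrices must be read as the operator norm induced by the Euclidean vector norm, since it is precisely for that norm that $\mu(A)$ is the exact exponential rate (for the Frobenius norm an extra factor $\sqrt{n}$ would appear).
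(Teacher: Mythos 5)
The paper itself gives no proof of this lemma: it is imported verbatim from Van Loan's 1977 paper as a known result, so there is no in-paper argument to compare yours against. Your blind proof is correct and self-contained: differentiating $v(t)=|e^{At}x_0|^2$, bounding $x^T(A^T+A)x$ by the Rayleigh quotient, and applying Gr\"onwall (or the comparison lemma, which avoids even needing $v>0$) is the standard logarithmic-norm derivation of $\|e^{At}\|_2\le e^{\mu(A)t}$, and the alternative sketch via $D^{+}\phi\le\mu(A)\phi$ is the equally standard Dini-derivative route; either would serve as a legitimate proof had the authors chosen to include one. Your closing caveat is the most substantive point: the bound holds for the operator norm induced by the Euclidean vector norm, not for the Frobenius norm that the paper's preliminaries nominally assign to matrices (for $A=0$ one would get $\|I\|_F=\sqrt{n}>1=e^{0}$), so the notation $|e^{At}|$ in the lemma must indeed be read as the spectral norm --- which is also how it is used in the bound leading to \eqref{eq_13}, where $\lambda_{\max}^A=\lambda_{\max}(A+A^T)=2\mu(A)$ appears after squaring. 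The only cosmetic slip in your write-up is asserting $|e^{Ah}|^2=\lambda_{\max}\bigl(I+h(A^T+A)+h^2A^TA\bigr)$ as if $e^{Ah}=I+hA$ exactly; since you absorb the discrepancy into the $O(h^2)$ term anyway, the conclusion $|e^{Ah}|=1+h\,\mu(A)+O(h^2)$ is unaffected.
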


\begin{myproposition}{(Jensen Inequality \cite{GuKhCh2003})}
\label{prop_1} For any matrix $M \in \mathcal{S}_m^+$ with constant entries, scalar $\gamma > 0$, vector function $\omega: [0,\gamma] \rightarrow \mathbb{R}^m$ such that the integrations concerned are well defined, then:
\begin{equation*}
\gamma \int_0^{\gamma} \omega^T(\beta)M\omega(\beta)d\beta \geq \left(\int_0^{\gamma}\omega(\beta)d\beta\right)^T M \int_0^{\gamma}\omega(\beta)d\beta.
\end{equation*}
\end{myproposition}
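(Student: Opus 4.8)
The plan is to prove Proposition~\ref{prop_1} by a completion-of-squares argument that uses only the positive definiteness of $M$. First I would introduce the shorthand $\bar{\omega} := \int_0^{\gamma}\omega(\beta)\,d\beta \in \mathbb{R}^m$ and form the centered integrand $g(\beta) := \omega(\beta) - \tfrac{1}{\gamma}\bar{\omega}$. Since $M \in \mathcal{S}_m^+$, the quadratic form $v \mapsto v^T M v$ is nonnegative for every $v \in \mathbb{R}^m$, and hence $\int_0^{\gamma} g^T(\beta) M g(\beta)\,d\beta \geq 0$, where this integral is well defined by the same hypothesis that makes the integrals in the statement well defined.

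The key step is to expand this nonnegative integral. Using bilinearity of $(v,w) \mapsto v^T M w$ together with $\int_0^{\gamma}\omega(\beta)\,d\beta = \bar{\omega}$ and $\int_0^{\gamma} d\beta = \gamma$, the two mixed terms each contribute $-\tfrac{1}{\gamma}\bar{\omega}^T M \bar{\omega}$ and the constant term contributes $+\tfrac{1}{\gamma}\bar{\omega}^T M \bar{\omega}$, so that
\begin{equation*}
0 \leq \int_0^{\gamma} g^T(\beta) M g(\beta)\,d\beta = \int_0^{\gamma} \omega^T(\beta) M \omega(\beta)\,d\beta - \frac{1}{\gamma}\,\bar{\omega}^T M \bar{\omega}.
\end{equation*}
Rearranging this and multiplying through by $\gamma > 0$ gives exactly the claimed inequality $\gamma \int_0^{\gamma} \omega^T(\beta) M \omega(\beta)\,d\beta \geq \bar{\omega}^T M \bar{\omega}$.

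An alternative route I would keep in reserve is to factor $M = M^{1/2} M^{1/2}$ with $M^{1/2} \in \mathcal{S}_m^+$ the symmetric square root, set $v(\beta) := M^{1/2}\omega(\beta)$, and reduce the claim to $\gamma\int_0^{\gamma} |v(\beta)|^2\,d\beta \geq \bigl|\int_0^{\gamma} v(\beta)\,d\beta\bigr|^2$, which is the Cauchy--Schwarz inequality applied coordinate-wise on $L^2([0,\gamma])$ to the pair of functions $1$ and $v_i(\cdot)$, then summed over $i = 1,\dots,m$.

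I do not anticipate any substantial obstacle: the argument relies solely on positivity of $M$ and elementary properties of the integral, and in particular uses neither Lemma~\ref{lem_1} nor Lemma~\ref{lem_2}. The only points needing a line of care are the standard existence and symmetric positive definiteness of $M^{1/2}$ (in the alternative route) and the bookkeeping in the expansion of $\int_0^{\gamma} g^T M g\,d\beta$; both are routine.
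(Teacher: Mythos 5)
Your proof is correct. Note that the paper itself offers no proof of Proposition~\ref{prop_1}: it is quoted as a known result from the cited reference, so there is no in-paper argument to compare against. Your completion-of-squares derivation is the standard one and is sound: with $\bar{\omega}=\int_0^{\gamma}\omega(\beta)\,d\beta$ and $g(\beta)=\omega(\beta)-\tfrac{1}{\gamma}\bar{\omega}$, the expansion of $\int_0^{\gamma}g^T(\beta)Mg(\beta)\,d\beta\geq 0$ does yield $\int_0^{\gamma}\omega^T(\beta)M\omega(\beta)\,d\beta-\tfrac{1}{\gamma}\bar{\omega}^TM\bar{\omega}\geq 0$ (the two cross terms each give $-\tfrac{1}{\gamma}\bar{\omega}^TM\bar{\omega}$ and the constant term gives $+\tfrac{1}{\gamma}\bar{\omega}^TM\bar{\omega}$, exactly as you state), and multiplying by $\gamma>0$ gives the claim. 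Your reserve route via $M^{1/2}$ and Cauchy--Schwarz against the constant function $1$ on $L^2([0,\gamma])$ is equally valid and essentially equivalent; a third common variant in the time-delay literature integrates the positive semidefinite block matrix with entries $\omega^T(\beta)M\omega(\beta)$, $\omega^T(\beta)$, $\omega(\beta)$, $M^{-1}$ and applies the Schur complement, which buys a form that generalizes more directly to matrix-weighted refinements, but for the scalar-interval statement here your elementary argument is complete and correctly uses only positive definiteness of $M$ and linearity of the integral, with no reliance on Lemma~\ref{lem_1} or Lemma~\ref{lem_2}.
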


\subsection{$\mathcal{L}_2$-Based ETC System:}
\label{sec::control_pre}
In this subsection, an overview of the ETC strategy proposed by \cite{WangLem2009} along with a new result (see Theorem \ref{theor_1}) are presented. Consider a sampled-data system that is given by:
\begin{equation}
\label{eq_1}
\begin{array}{l}
\dot{\xi}(t)=A \xi (t)+B \nu (t)+E \omega (t), \forall t \in [0, \tau(x)),\\
\xi(0)=x,
\end{array}
\end{equation}
where $ \xi(t) \in \mathbb{R}^n$, $\nu(t) \in \mathbb{R}^m$, $\omega(t) \in \mathbb{R}^p$, $\tau(x)$ denotes the sampling period associated with $\xi(0)$, and $A$, $B$, and $E$ have compatible dimensions. The control law is implemented in a sample-and-hold manner as follows:
\begin{equation}
\label{eq_2}
\nu(t)=-Kx.
\end{equation}

Furthermore, assume that the disturbance $\omega$ is a vanishing type disturbance \cite{WangLem2009}, i.e.,
\begin{equation}
\label{eq_3}
\exists W \geq0 \mbox{~such that~} |w(t)|^2 \leq W |x|^2, \forall t \in [0, \tau(x)).
\end{equation}

Denote by $\epsilon$, the error signal endured by the system (\ref{eq_1})-(\ref{eq_2}), $\epsilon (t) = x - \xi_x (t)$ where $\xi_x(t)$ is the solution of (\ref{eq_1}). Reformulating (\ref{eq_1}), the evolution of state and error signals can be rewritten in a compact form as follows:
\begin{equation}
\label{eq_5}
\xi_x (t) = \Lambda (t) x+\Omega (t)
\end{equation}
and 
\begin{equation}
\label{eq_7}
\epsilon (t)= \left( I- \Lambda(t)\right) x - \Omega(t)
\end{equation}
where
\begin{equation}
\label{eq_6}
\left\{
\begin{array}{l}
\Lambda (t) = I+\int_0^t e^{As}ds(A-BK),\\
\Omega (t) = \int_0^t e^{A(t-s)}E \omega(s) ds.
\end{array} \right.
\end{equation}

Assume that there exists a quadratic Lyapunov function $V(\xi)=\xi^TP\xi$ such that $P$ is the solution to the Algebraic Riccati Equation (ARE) given by:
\begin{equation}
\label{are_1}
PA+A^TP-Q+R=0
\end{equation}
where
\begin{equation}
\label{are_2}
Q=PBB^TP,\quad R=\frac{1}{\gamma^2}PEE^TP,\quad \gamma>0.
\end{equation}

The existence of $V$ guarantees that the system (\ref{eq_1}) with the full-state feedback $\nu(t)=-K\xi(t)=-B^TP\xi(t)$ is finite-gain $\mathcal{L}_2$ stable from $\omega$ to $(x^T,u^T)$ with an induced gain less than $\gamma$ \cite{WangLem2009}. Then, the state-dependent TM, proposed by \cite{WangLem2009}, is given by:
\begin{equation}
\label{eq_8}
\tau(x):=\inf\{t>0|\;\epsilon^T(t)M\epsilon(t) \geq x^T N x\}
\end{equation}
where
\begin{equation}
\label{eq_9}
\begin{array}{lll}
M=(1-\beta^2)I+PBB^TP,\\
N=\frac{1}{2}(1-\beta^2)I+PBB^TP,
\end{array}
\end{equation}
and $\beta >0$ is a user-defined scalar related to the TM (\ref{eq_8}).

\begin{mytheorem}
\label{theor_1}
Consider the system (\ref{eq_1})-(\ref{eq_2}) with the triggering mechanism (\ref{eq_8}). Assume there exist a scalar $\mu$ and a symmetric matrix $\Psi$ such that
\begin{equation}
\label{eq_17_1_a}
\mu\geq0,\quad \Psi\succ0,\quad M+\Psi\preceq \mu I,
\end{equation}
\begin{equation}
\label{eq_17_1_b}
\Phi(t) \succeq 0,
\end{equation}
where 
\begin{equation}
\label{eq_17_123}
\Phi(t) = \left[
\begin{array}{lr}
\Phi_1(t) & \Phi_2(t)\\
\Phi_3(t) & \Phi_4(t)
\end{array}
\right],
\end{equation}
\begin{equation*}
\begin{array}{l}
\Phi_1(t)=(\Lambda(t)-I)^T M (\Lambda(t)-I)\\
\quad\quad\;\;+t W \mu\lambda_{\max}(E^TE)d_A(t)I- N,\\
\Phi_2(t)=\Phi_3^T(t)=(\Lambda(t)-I)^T M^T,\\
\Phi_4(t)=- \Psi,
\end{array}
\end{equation*}
are satisfied. Then, the sampling period $\tau(x)$ generated by (\ref{eq_8}) is lower bounded by:
\begin{equation}
\label{eq_17_4}
\tau'(x):=\inf\{t>0|\;\Phi(t)\succeq0 \}.
\end{equation}
\end{mytheorem}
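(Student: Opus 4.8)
\emph{Proof strategy.} The plan is to bound $\epsilon^{T}(t)M\epsilon(t)$ along the trajectories of (\ref{eq_1})--(\ref{eq_2}) from above by a purely state-dependent quantity, so that the triggering test $\epsilon^{T}M\epsilon\ge x^{T}Nx$ in (\ref{eq_8}) cannot be met as long as a certain $t$-dependent LMI holds; the first time that LMI fails is then a lower bound on $\tau(x)$. First I would use the closed form (\ref{eq_7}), i.e.\ $\epsilon(t)=-\bigl((\Lambda(t)-I)x+\Omega(t)\bigr)$, and expand (using $M=M^{T}$)
\[
\epsilon^{T}M\epsilon = x^{T}(\Lambda-I)^{T}M(\Lambda-I)x + 2x^{T}(\Lambda-I)^{T}M\Omega + \Omega^{T}M\Omega .
\]
The two terms carrying $\Omega(t)$ must be eliminated because $\Omega$ depends on the unknown disturbance $\omega$.

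For the cross term I would apply Lemma~\ref{lem_1} with $E=x^{T}(\Lambda-I)^{T}M$, $G=\Omega(t)$ and $P=\Psi^{-1}$ (admissible since $\Psi\succ0$ by (\ref{eq_17_1_a})), which gives $2x^{T}(\Lambda-I)^{T}M\Omega\le x^{T}(\Lambda-I)^{T}M\Psi^{-1}M(\Lambda-I)x+\Omega^{T}\Psi\Omega$. Collecting the $\Omega$-quadratic terms leaves $\Omega^{T}(M+\Psi)\Omega$, which by the third inequality in (\ref{eq_17_1_a}) together with $\mu\ge0$ is at most $\mu\,|\Omega(t)|^{2}$. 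To bound $|\Omega(t)|^{2}$ I would write $\Omega(t)=\int_{0}^{t}e^{A(t-s)}E\omega(s)\,ds$, apply Jensen's inequality (Proposition~\ref{prop_1}) with $M=I$ to get $|\Omega(t)|^{2}\le t\int_{0}^{t}|e^{A(t-s)}E\omega(s)|^{2}\,ds$, then bound $|e^{A(t-s)}E\omega(s)|^{2}\le|e^{A(t-s)}|^{2}\lambda_{\max}(E^{T}E)|\omega(s)|^{2}$, use Lemma~\ref{lem_2} for $|e^{A(t-s)}|$ and the vanishing-disturbance hypothesis (\ref{eq_3}) for $|\omega(s)|^{2}\le W|x|^{2}$; after the change of variables $\theta=t-s$ this collapses to $|\Omega(t)|^{2}\le tW\lambda_{\max}(E^{T}E)d_{A}(t)|x|^{2}$, with $d_{A}(t)$ the same scalar (expressible via Lemma~\ref{lem_2} as $\int_{0}^{t}e^{2\mu(A)\theta}d\theta$) that appears in $\Phi_{1}(t)$.

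Combining these estimates yields $\epsilon^{T}(t)M\epsilon(t)\le x^{T}S(t)x+x^{T}Nx$, where
\[
S(t):=(\Lambda(t)-I)^{T}(M+M\Psi^{-1}M)(\Lambda(t)-I)+tW\mu\lambda_{\max}(E^{T}E)d_{A}(t)I-N .
\]
Hence whenever $S(t)\prec0$ the inequality in (\ref{eq_8}) is strictly violated for every $x\neq0$, i.e.\ no event is triggered at time $t$. The final step is a Schur-complement identity: since the $(2,2)$ block of $\Phi(t)$ is $-\Psi\prec0$, the matrix $\Phi(t)$ is negative definite precisely when its Schur complement $\Phi_{1}(t)+\Phi_{2}(t)\Psi^{-1}\Phi_{3}(t)=S(t)$ is negative definite. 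Since $\Phi(0)=\mathrm{diag}(-N,-\Psi)\prec0$ (using $N\succ0$, valid for the usual $\beta\in(0,1)$), the condition $S(t)\prec0$ holds on $[0,\tau'(x))$ by definition of the infimum in (\ref{eq_17_4}); thus no event occurs before $\tau'(x)$, and therefore $\tau(x)\ge\tau'(x)$.

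The step I expect to be the main obstacle is the treatment of the disturbance contribution $\Omega(t)$: decoupling the mixed quantity $x^{T}(\Lambda-I)^{T}M\Omega$ through Lemma~\ref{lem_1} while keeping $\Psi$ (and $\mu$) free so that (\ref{eq_17_1_a}) can later be optimized, and simultaneously producing a scalar, trajectory-independent majorant of $|\Omega(t)|^{2}$ that is tight enough to be useful; the rest is routine algebra plus a standard Schur complement. A minor point requiring care is the boundary behaviour --- reconciling the strict/non-strict inequalities in (\ref{eq_8}) and in (\ref{eq_17_1_b}) and confirming $\tau'(x)>0$.
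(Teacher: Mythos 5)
Your proposal is correct and follows essentially the same route as the paper's own proof: expand $\epsilon^T(t)M\epsilon(t)$ via (\ref{eq_7}), decouple the cross term with Lemma~\ref{lem_1} using the weight $\Psi$, bound $\Omega^T(t)(M+\Psi)\Omega(t)\leq\mu|\Omega(t)|^2$ and then invoke Jensen's inequality, Lemma~\ref{lem_2} and (\ref{eq_3}) to obtain the $tW\mu\lambda_{\max}(E^TE)d_A(t)I$ term, and finally relate your $S(t)$ (the paper's $\Theta(t)$) to $\Phi(t)$ through the Schur complement of the $-\Psi$ block. Your closing comments on the strict/non-strict and sign conventions in (\ref{eq_17_1_b}) and (\ref{eq_17_4}) are in fact a slightly more explicit treatment of the endgame than the paper's own final sentence, which glosses over the same point, so there is no substantive difference in approach.
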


\begin{proof}
Substitute (\ref{eq_7}) into (\ref{eq_8}). Then, the TM (\ref{eq_8}) can be rewritten by:
\begin{equation}
\label{eq_17}
\tau(x)=\min\{t>0|\; \mathcal{F}_{\omega}(x,t) \geq0\}, \forall x \in \mathbb{R}^n,
\end{equation}
where
\begin{equation}
\label{eq_10}
\begin{array}{ll}
\mathcal{F}_{\omega}(x,t)&= x^T [(\Lambda(t)-I)^TM(\Lambda(t)-I)- N]x \\
&+ x^T(\Lambda(t)-I)^T M \Omega(t)+\Omega^T(t) M (\Lambda(t)-I)x \\
&+ \Omega^T(t)M\Omega(t).
\end{array}
\end{equation}

Let $\lambda_{\max}^A$ denote $\lambda_{\max}(A+A^T)$ for the sake of compactness. Using Lemma \ref{lem_1} the terms that are dependent on both $x$ and $\Omega(t)$ in $\mathcal{F}_{\omega}(x,t)$ can be decoupled into: 
\begin{equation}
\label{eq_11}
\begin{array}{l}
x^T(\Lambda(t)-I)^T M \Omega(t)+\Omega^T(t) M (\Lambda(t)-I)x \leq \\
\Omega^T(t) \Psi \Omega(t) + x^T(\Lambda(t)-I)^T M \Psi^{-1} M (\Lambda(t)-I)x,
\end{array}
\end{equation}
where $\Psi=\Psi^T \succ0$. Then, it follows that:
\begin{equation}
\label{eq_13}
\begin{array}{l}
\begin{array}{ll}
\Omega^T(t)(M+\Psi)\Omega(t)&
\end{array}\\
\begin{array}{l}
\leq \mu (\int_0^t e^{A(t-s)}E\omega(s)ds)^T (\int_0^t e^{A(t-s)}E\omega(s)ds) \\
 \mbox{(assuming~}M+\Psi\preceq \mu I\mbox{~and~}\mu\geq 0\mbox{)}\\
\leq t \mu \int_0^t e^{(t-s)\lambda_{\max}^A}\omega^T(s)E^TE\omega(s)ds\\
 \mbox{(using Jensen's inequality and Lemma \ref{lem_2})}\\
\leq t W \mu \lambda_{\max}(E^TE) (\int_0^t e^{\lambda_{\max}^A(t-s)}ds) |x|^2\\
 \mbox{(using (\ref{eq_3}))}\\
=tW\mu \lambda_{\max}(E^TE)d_A(t)x^Tx, 
\end{array}
\end{array}
\end{equation}
where
\begin{small}
\begin{equation}
\label{eq_14}
d_A(t)=\left\{
\begin{array}{l}
\frac{1}{\lambda_{\max}^A}(e^{\lambda_{\max}^At}-1),\lambda_{\max}^A\neq 0 \\
t,\lambda_{\max}^A= 0.
\end{array}\right.
\end{equation}
\end{small}

Based on the aforementioned procedure, one concludes that:
\begin{equation}
\label{eq_15}
\mathcal{F}_{\omega}(x,t) \leq x^T \Theta(t)x  
\end{equation}
where
\begin{equation}
\label{eq_16}
\begin{array}{ll}
\Theta(t)&= (\Lambda(t)-I)^T(M+M \Psi^{-1} M)(\Lambda(t)-I)\\
&+tW\mu \lambda_{\max}(E^TE)d_A(t)I- N.
\end{array}
\end{equation}

Then, we employ the Schur complement in order to transform (\ref{eq_16}) into (\ref{eq_17_123}). Note that (\ref{eq_16}) is not linear in $\Psi$ while (\ref{eq_17_123}) is linearly dependent on $\Psi$. Considering (\ref{eq_15}), since $\Phi(t)\succeq 0$ implies $x^T \Theta(t)x\geq 0$ by the Schur complement, it follows that $\tau(x)\geq \tau'(x)$. This concludes the proof.
\end{proof}

Thus, Theorem \ref{theor_1} enables us to avoid the unknown behavior of perturbation $\omega(t)$ in analyzing the sampling begavior of (\ref{eq_8}). However, it is still intractable to use (\ref{eq_17_4}) for the analysis since it has to be checked for an infinite number of instants $t$ and it clearly lacks any insight on how the state $x$ at the sampling instant affects the sampling period $\tau(x)$.

\subsection{Systems and Relations}
\label{sec:modelprel-systems}
In what follows, we review some notions from the field of system theory to formally characterize the outcome of the proposed framework. Let $Z$ be a set and $Q\subseteq Z\times Z$ be an equivalence relation on $Z$. Then, $[z]$ denotes the equivalence class of $z \in Z$ and $Z/Q$ denotes the set of all equivalence classes. A metric (or a distance function) $d:Z \times Z \rightarrow \mathbb{R} \cup \{ + \infty\}$ on $Z$ satisfies, $\forall x, y, z \in Z$: i) $d(x,y)=d(y,x)$, ii) $d(x,y)=0 \leftrightarrow x=y$, and ii) $d(x,y) \leq d(x,z)+d(y,z)$. The ordered pair $(Z,d)$ is said to be a metric space.

\begin{mydefinition}{(Hausdorff Distance \cite{ewald2012combinatorial})}
\label{def_haus}
Assume $X$ and $Y$ are two non-empty subsets of a metric space $(Z,d)$. The Hausdorff distance $d_{H}(X,Y)$ is given by:
$$\max \{ \underset{x \in X}{\sup} \underset{y \in Y}{\inf} d(x,y), \underset{y \in Y}{\sup} \underset{x \in X}{\inf} d(x,y) \}.$$
\end{mydefinition}

It follows that the ordered pair $(\mathbb{IR}^+,d_H)$ is a metric space. Now, we introduce some concepts from system theory and particularly a modified notion of \emph{quotient system} adopted from \cite{Arman2015} (see e.g.\ \cite{Tabuada2009} for the traditional definition). 
\begin{mydefinition}[System {\cite{Tabuada2009}}]
\label{def_sys}
A system is a sextuple $(X,X_0,U,\rTo,Y,H)$ consisting of:
\begin{itemize}
\item a set of states $X$;
\item a set of initial states $X_0 \subseteq X$;
\item a set of inputs $U$;
\item a transition relation $\rTo \subseteq X \times U \times X$;
\item a set of outputs $Y$;
\item an output map $H: X \rightarrow Y$.
\end{itemize}
\end{mydefinition}

When the set of outputs $Y$ of a system is endowed with a metric, it is called a metric system. An autonomous system is a system for which the cardinality of its input set is at most one.

\begin{mydefinition}[Approximate Simulation Relation {\cite{Tabuada2009}}]
\label{def_asr}
Consider two metric systems $S_a=(X_{a},X_{a0},U_{a},\rTo_{a},Y_a,H_{a})$ and $S_b=(X_{b},X_{b0},U_{b},\rTo_{b},Y_b,H_{b})$ with $Y_a=Y_b$, and let $\varepsilon \in \mathbb{R}^+_0$, where $\mathbb{R}^+_0$ represents  the set of nonnegative real numbers. A relation $R \subseteq X_a \times X_b$
is an $\varepsilon$-approximate simulation relation from $S_a$ to $S_b$ if the following three
conditions are satisfied:
\begin{enumerate}
\item $\forall x_{a0} \in X_{a0}, \exists x_{b0} \in X_{b0} \text{ such that } (x_{a0},x_{b0}) \in R$;
\item $\forall (x_a,x_b) \in R, \text{ we have } d(H_a(x_a),H_b(x_b)) \leq \varepsilon$; 
\item $\forall (x_a,x_b) \in R, (x_a,u_a,x'_a) \in \underset{a}{\rTo}$ in $S_a$  $\exists (x_b,u_b,x'_b) \in \underset{b}{\rTo}$ in $S_b$ satisfying $(x'_a,x'_b) \in R$.
\end{enumerate}
We say that $S_b$ $\varepsilon$-approximately simulates $S_a$, denoted by $S_a\preceq^\varepsilon_\mathcal{S} S_b$, if there exists an $\varepsilon$-approximate simulation relation $R$ from $S_a$ to $S_b$. 
\end{mydefinition}

\begin{mydefinition}[Power Quotient System {\cite{Arman2015}}]
\label{def_quo}
Let $S=(X,X_0,\varnothing,\rTo,Y,H)$ be an autonomous system and $R$ be an equivalence relation on $X$.
The power quotient of $S$ by $R$, denoted by $S_{/R}$, is the autonomous system $(X_{/R},X_{/R,0},\varnothing,\rTo_{/R},Y_{/R},H_{/R})$ consisting of:
\begin{itemize}
\item $X_{/R}=X/R$;
\item $X_{/R,0}=\{ x_{/R} \in X_{/R} | x_{/R} \cap X_0 \neq \varnothing \}$; %
\item $(x_{/R},u,x'_{/R}) \in \rTo_{/R}$ if $\exists (x,u,x') \in \rTo$ with $x \in x_{/R}$ and $x' \in x'_{/R}$;
\item $Y_{/R} \subset 2^Y$;
\item $H_{/R}(x_{/R})=\underset{x \in x_{/R}}{\cup}H(x)$.
\end{itemize}
\end{mydefinition}

\begin{mylemma}[{\cite{Arman2015}}]
\label{lem_app_sim}
Let $S$ be an autonomous metric system, $R$ be an equivalence relation on $X$, and let the autonomous metric system $S_{/R}$ be the power quotient system of $S$ by $R$. For any 
$$\varepsilon\geq\max_{\substack{x \in x_{/R} \\ x_{/R}\in X/R}} d(H(x),H_{/R}(x_{/R})),$$ 
with $d$ the Hausdorff distance over the set $2^Y$,
$S_{/R}$ $\varepsilon$-approximately simulates $S$,
i.e.\ $S\preceq^\varepsilon_\mathcal{S} S_{/R}$.
\end{mylemma}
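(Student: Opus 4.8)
The plan is to verify the three conditions of Definition~\ref{def_asr} directly for the candidate relation $R = \{(x, x_{/R}) \mid x \in x_{/R},\ x_{/R} \in X/R\}$, i.e.\ the membership relation between a state and its equivalence class. This is the natural relation associated with the power quotient construction of Definition~\ref{def_quo}, and most of the work amounts to unwinding the definitions; the only quantitative ingredient is the hypothesis on $\varepsilon$.

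First I would check condition~(1): given $x_0 \in X_0$, the class $[x_0]$ satisfies $[x_0] \cap X_0 \neq \varnothing$, hence $[x_0] \in X_{/R,0}$ by the definition of $X_{/R,0}$, and $(x_0, [x_0]) \in R$ by construction. Next, condition~(3): given $(x, x_{/R}) \in R$ and a transition $(x, u, x') \in \rTo$ in $S$ (note $S$ is autonomous, so $u$ ranges over the singleton input set), the definition of $\rTo_{/R}$ immediately yields $(x_{/R}, u, [x']) \in \rTo_{/R}$, since $x \in x_{/R}$ and $x' \in [x']$ witness it; and $(x', [x']) \in R$ by construction, so the successor pair stays in $R$. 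Both of these steps are essentially bookkeeping.

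The substantive step is condition~(2): for every $(x, x_{/R}) \in R$ we need $d\big(H(x), H_{/R}(x_{/R})\big) \leq \varepsilon$, where here $H(x)$ is viewed as the singleton $\{H(x)\} \in 2^Y$ so that the Hausdorff distance $d$ on $2^Y$ applies to both arguments (recall $Y_{/R} \subset 2^Y$ and $H_{/R}(x_{/R}) = \bigcup_{x \in x_{/R}} H(x)$, so the output sets of $S$ must be lifted to singletons to make $Y_a = Y_b$). Since $x \in x_{/R}$, this distance is at most $\max_{x \in x_{/R},\, x_{/R} \in X/R} d(H(x), H_{/R}(x_{/R}))$, which by hypothesis is bounded by $\varepsilon$. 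I expect the main obstacle — really the main subtlety rather than a difficulty — to be precisely this identification of the common output space: one must be careful that $S$ is reinterpreted as a metric system over $2^Y$ via $x \mapsto \{H(x)\}$ before comparing it with $S_{/R}$, and that the relevant Hausdorff distance is the one on $2^Y$ and not on $Y$. Once the output spaces are correctly matched, all three conditions follow, and $R$ is an $\varepsilon$-approximate simulation relation from $S$ to $S_{/R}$, giving $S \preceq^\varepsilon_\mathcal{S} S_{/R}$.
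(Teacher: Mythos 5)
Your proof is correct and is essentially the intended argument: the paper cites this lemma from \cite{Arman2015} without reproducing a proof, and the standard proof there is exactly your membership relation $R=\{(x,x_{/R})\mid x\in x_{/R}\}$ with conditions (1) and (3) following directly from the power quotient construction and condition (2) from the hypothesis on $\varepsilon$. Your remark about lifting $H(x)$ to the singleton $\{H(x)\}$ so that both outputs live in $2^Y$ under the Hausdorff distance is precisely the right way to reconcile $Y_a=Y_b$ in Definition~\ref{def_asr}.
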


Now, we appropriately modify Definition \ref{def_quo} and Lemma \ref{lem_app_sim} for the case that one can construct an over approximation of the power quotient system, namely $\bar{S}_{/R}$.

\begin{mydefinition}{(Approximate Power Quotient System {\cite{ArmanTCoN}})}
\label{def_quo_app}
Let $\mathcal{S}=(X,X_0,U,\rTo,Y,H)$ be a system, $R$ be an equivalence relation on $X$, and $\mathcal{S}_{/R}=(X_{/R},X_{/R,0},U_{/R},\rTo_{/R},Y_{/R},H_{/R})$ be the power quotient of $\mathcal{S}$ by $R$. An approximate power quotient of $\mathcal{S}$ by $R$, denoted by $\bar{\mathcal{S}}_{/R}$, is a system $(X_{/R},X_{/R,0},U_{/R},\rTo_{\bar{/R}},\bar{Y}_{/R},\bar{H}_{/R})$ such that, $ \underset{\bar{/R}}{\rightarrow} \supseteq \underset{/R}{\rightarrow}$, $\bar{Y}_{/R} \supseteq Y_{/R} $, and $\bar{H}_{/R}(x_{/R}) \supseteq H_{/R}(x_{/R})$, $\forall x_{/R}\in X_{/R}$.
\end{mydefinition}

\begin{mycorollary}[{\cite{ArmanTCoN}}]
\label{cor_1}
Let $S$ be a metric system, $R$ be an equivalence relation on $X$, and let the metric system $\bar{S}_{/R}$ be the approximate power quotient system of $S$ by $R$. For any 
$$\varepsilon\geq\max_{\substack{x \in x_{/R} \\ x_{/R}\in X_{/R}}} d(H(x),\bar{H}_{/R}(x_{/R})),$$ 
with $d$ the Hausdorff distance over the set $2^Y$,
$\bar{S}_{/R}$ $\varepsilon$-approximately simulates $S$,
i.e.\ $S\preceq^\varepsilon_\mathcal{S} \bar{S}_{/R}$.
\end{mycorollary}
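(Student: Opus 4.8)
The plan is to verify the three conditions of Definition~\ref{def_asr} directly for the natural relation that pairs each state of $S$ with its own equivalence class, mirroring the proof of Lemma~\ref{lem_app_sim}; the only new feature is that the objects of $\bar{\mathcal{S}}_{/R}$ are \emph{enlargements} of those of the exact power quotient $\mathcal{S}_{/R}$. Concretely, I would take
\[
R=\{(x,x_{/R})\in X\times X_{/R}\mid x\in x_{/R}\},
\]
so that $x$ is related precisely to its own equivalence class $[x]\in X_{/R}=X/R$. The guiding observation is that passing from $\mathcal{S}_{/R}$ to $\bar{\mathcal{S}}_{/R}$ replaces $\rTo_{/R}$, $Y_{/R}$, $H_{/R}$ by the supersets $\rTo_{\bar{/R}}\supseteq\rTo_{/R}$, $\bar Y_{/R}\supseteq Y_{/R}$, $\bar H_{/R}(x_{/R})\supseteq H_{/R}(x_{/R})$: condition~1 is untouched, condition~3 can only become easier, and the effect on condition~2 is exactly what is absorbed by stating the bound on $\varepsilon$ with $\bar H_{/R}$ in place of $H_{/R}$.

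For condition~1, given $x_0\in X_0$ I would pick the class $[x_0]$: by the definition of $X_{/R,0}$ (Definition~\ref{def_quo}) the class $[x_0]$ meets $X_0$, hence $[x_0]\in X_{/R,0}$, and $x_0\in[x_0]$ gives $(x_0,[x_0])\in R$. For condition~3, let $(x,x_{/R})\in R$ and $(x,u,x')\in\,\rTo$. Since $x\in x_{/R}$ (so $x_{/R}=[x]$), the definition of the quotient transition relation yields $(x_{/R},u,[x'])\in\,\rTo_{/R}$, and by $\rTo_{\bar{/R}}\supseteq\rTo_{/R}$ this triple is also a transition of $\bar{\mathcal{S}}_{/R}$; together with $x'\in[x']$, i.e.\ $(x',[x'])\in R$, this closes the diagram. (If $\mathcal{S}$ is not autonomous one simply carries the input label $u$ through the induced input map of the quotient; in the autonomous case there is nothing to carry.)

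For condition~2, fix $(x,x_{/R})\in R$. Identifying $H(x)\in Y$ with the singleton $\{H(x)\}\in 2^Y$, and using that $(x,x_{/R})$ ranges over exactly the index set of the maximum in the hypothesis, I would bound
\[
d\bigl(H(x),\bar H_{/R}(x_{/R})\bigr)\ \le\ \max_{x\in x_{/R},\ x_{/R}\in X_{/R}} d\bigl(H(x),\bar H_{/R}(x_{/R})\bigr)\ \le\ \varepsilon,
\]
with $d$ the Hausdorff distance on $2^Y$. All three conditions then hold, so $R$ is an $\varepsilon$-approximate simulation relation and $S\preceq^\varepsilon_{\mathcal{S}}\bar S_{/R}$.

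The proof is essentially bookkeeping, and the only subtlety worth flagging is why Corollary~\ref{cor_1} cannot simply be derived from Lemma~\ref{lem_app_sim} by ``exact quotient plus over-approximation'': the Hausdorff distance is not monotone under enlarging the target set, so $d(\{H(x)\},\bar H_{/R}(x_{/R}))$ is in general incomparable with $d(\{H(x)\},H_{/R}(x_{/R}))$. This is precisely why the corollary restates the $\varepsilon$-bound in terms of $\bar H_{/R}$ and why the verification must be carried out against $\bar{\mathcal{S}}_{/R}$ directly rather than routed through $\mathcal{S}_{/R}$; the analogous facts for the transitions and initial states are immediate because those sets only grow.
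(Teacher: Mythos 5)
The paper gives no proof of this corollary (it is imported from \cite{ArmanTCoN}), and your direct verification of the three conditions of Definition~\ref{def_asr} for the membership relation $R=\{(x,x_{/R}) \mid x\in x_{/R}\}$ — initial states via $[x_0]\in X_{/R,0}$, transitions via $\rTo_{/R}\subseteq\rTo_{\bar{/R}}$, and the output bound directly from the hypothesis on $\varepsilon$ — is exactly the standard argument and is correct. One small correction to your closing remark: since $H(x)$ is identified with a singleton, the Hausdorff distance $d(\{H(x)\},\cdot)$ reduces to the supremum of pointwise distances and is therefore monotone non-decreasing when the second set is enlarged, so $d(\{H(x)\},\bar{H}_{/R}(x_{/R}))\geq d(\{H(x)\},H_{/R}(x_{/R}))$ rather than the two being incomparable — which in fact only reinforces your point that the $\varepsilon$-bound must be stated with $\bar{H}_{/R}$ and the simulation verified against $\bar{S}_{/R}$ directly rather than routed through Lemma~\ref{lem_app_sim}.
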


\subsection{Timed Safety Automaton}

In what follows, we present a formal definition for TSA. A TSA \cite{Alur1994} is a directed graph extended with real-valued variables (called clocks) that model the logical clocks. We define $C$ as a set of finitely many clocks. Clock constraints are used to restrict the behavior of the automaton. A clock constraint is a conjunctive formula of atomic constraints of the form $x \bowtie n$ or $x - y \bowtie n$ for $x,y \in  C$, $\bowtie \in \{\leq,<,=,>,\geq\}$ and $n \in \mathbb N$. We use $\mathcal B(C)$ to denote the set of clock constraints.

\begin{mydefinition}\label{def:ta}(Timed Safety Automaton \cite{henzinger1994symbolic})
A \emph{timed safety automaton} $\mathsf{TSA}$ is a sextuple $(L,\ell_0,\mathsf{Act},C,E,\mathsf{Inv})$ where:
\begin{itemize}
\item $L$ is a set of finitely many locations (or vertices);
\item $\ell_0 \in L$ is the initial location;
\item $\mathsf{Act}$ is the set of actions;
\item $C$ is a set of finitely many real-valued clocks;
\item $E \subseteq L \times \mathcal B(C) \times \mathsf{Act} \times 2^{C} \times L$ is the set of edges;
\item $\mathsf{Inv} : L \to \mathcal B(C)$ assigns invariants to locations.
\end{itemize}

The location invariants are restricted to constraints of the form: $c \leq n$ or $c < n$ where $c$ is a clock and $n$ is a natural number.
\end{mydefinition}

\subsection{Problem Statement}
\label{sec:abs-problem}
Consider the system 
$S=(X,X_0,\varnothing,\rTo,Y,H)$:
\begin{itemize}
\item $X=\mathbb{R}^n$;
\item $X_0=\mathbb{R}^n$;
\item $(x,x') \in \rTo$ iff $\xi_x(\tau(x))=x'$ given by \eqref{eq_1}-\eqref{eq_2}, and \eqref{eq_8};
\item $Y \subset \mathbb{R}^+$;
\item $H: \mathbb{R}^n \rightarrow \mathbb{R}^{+}$ where $H(x)=\tau(x)$.
\end{itemize}
The output of the above system generates all possible sequences of inter-sample intervals of the concrete system \eqref{eq_1}-\eqref{eq_2} with the TM~\eqref{eq_8}.

\begin{myproblem}
\label{prob_1}
Provide a construction of power quotient systems $S_{/\mathcal{P}}$ of systems $S$ as defined above.
\end{myproblem}

Based on Definition \ref{def_quo}, we propose to construct the system 
$S_{/\mathcal{P}}=(X_{/\mathcal{P}},X_{/\mathcal{P},0},\varnothing, \underset{/\mathcal{P}}{\rTo},Y_{/\mathcal{P}},H_{/\mathcal{P}})$
where 
\begin{itemize}
\item $X_{/\mathcal{P}}= \mathbb{R}_{/\mathcal{P}}^n:= \{\mathcal{R}_1,\dots,\mathcal{R}_q \}$;
\item $X_{/\mathcal{P},0}=\mathbb{R}_{/\mathcal{P}}^n$;
\item $(x_{/\mathcal{P}},x'_{/\mathcal{P}}) \in
\underset{/\mathcal{P}}{\rTo} $ if $\exists x \in x_{/\mathcal{P}}$, $\exists x' \in x'_{/\mathcal{P}}$ such that $\xi_x(H(x))=x'$ as determined by \eqref{eq_1}-\eqref{eq_2};
\item $Y_{/\mathcal{P}} \subset 2^Y \subset \mathbb{IR}^{+}$, where $\mathbb{IR}^{+}$ represents the set of closed intervals $[a,b]$ such that $0 < a \leq b$;
\item \small{$H_{/\mathcal{P}}(x_{/\mathcal{P}})= [\underset{x \in x_{/\mathcal{P}}}{\min}H(x),\underset{x \in x_{/\mathcal{P}}}{\max}H(x)]:=[\ubar{\tau}_{x_{/\mathcal{P}}},\bar{\tau}_{x_{/\mathcal{P}}}]$}.
\end{itemize}

The equivalence relation $\mathcal{P}$ on $\mathbb{R}^n$ partitions the state space of $S$ (i.e. the ETC system) into the set $X_{/\mathcal{P}}$ with a finite cardinality. However, since the exact construction of $\mathcal{S}_{/\mathcal{P}}$ is in general impossible, we construct instead $\bar{\mathcal{S}}_{/\mathcal{P}}$ (see Definition~\ref{def_quo_app}). Later on, it will be shown that the constructed $\bar{S}_{/\mathcal{P}}$ is equivalent to a TSA.

\section{Abstractions of event-triggered LTI systems}
\label{sec:abs}

In this section, we introduce the framework to solve Problem~\ref{prob_1} in the following order: 1) a proper definition of an equivalence relation $\mathcal{P}$ on $\mathbb{R}^n$, 2) a tractable approach to compute the output map $\bar{H}_{/\mathcal{P}}$ and its corresponding output set $\bar{Y}_{/\mathcal{P}}$, and 3) a reachability-based analysis to derive the discrete transitions among abstract states $x_{/\mathcal{P}}$. 

\subsection{State set}
\label{sec:abs-construction-states}

The type of state set construction approach mainly relies on an intuitive observation from \eqref{eq_15}.

\begin{myremark}
\label{rem_1} 
Consider that the right-hand side of (\ref{eq_15}) is used to analyze the sampling behavior of (\ref{eq_17}) instead of $\mathcal{F}_{\omega}(x,t)$. Then, the sampling periods of all states, located on a line that passes through the origin excluding the origin itself, are lower bounded by the same sampling period, i.e.\ $\tau'(x)=\tau'(\lambda x)$, $\forall \lambda \neq 0$.
\end{myremark}

It follows that a proper approach to abstract the state space is via partitioning it into a finite number of convex polyhedral cones (pointed at the origin) $\mathcal{R}_s$ where $s \in \{1,\dots,q \}$ and $\bigcup_{s=1}^q \mathcal{R}_s = \mathbb{R}^n$. This state space abstraction technique is proposed by \cite{Fiter2012}, dividing each of the angular spherical coordinates of $x \in \mathbb{R}^n$: $\theta_1, \dots,\theta_{n-2} \in [0,\pi]$, $\theta_{n-1} \in [-\pi,\pi]$ into $\bar{m}$ (not necessarily equidistant) intervals resulting in $q=\bar{m}^{(n-1)}$ conic regions. Furthermore, since the term $x^T \Theta(t)x$ is quadratic in $x$, it is sufficient to only analyze half of the state space (e.g.\ by taking $\theta_{n-1} \in [0,\pi]$). Thus, the equivalence relation $\mathcal{P}$ to construct the abstraction is given by:
$$
(x,x')\in\mathcal{P} \Leftrightarrow \exists\, s \in  \{1,\dots,q \}\, \text{s.t.}\,  x,x'\in \mathcal{R}_s,
$$
where $q$ is the number of equivalence classes. Hence, the equivalence classes of $\mathcal{P}$ are defined by polyhedral cones pointed at the origin given by $\mathcal{R}_s=\{ x \in \mathbb{R}^2| \; x^T Q_s x \geq 0 \},\; Q_s\in \mathcal{S}_2$ whenever $n=2$ or $\mathcal{R}_s=\{ x \in \mathbb{R}^n| \; E_s x \geq 0\}$, $E_s \in \mathcal{S}_{n\times p}$ otherwise.

\subsection{Output Map}
\label{sec:abs-construction-output}

In this subsection, we present how to construct $\bar{H}_{/\mathcal{P}}$ and $\bar{Y}_{/\mathcal{P}}$. For all $x \in \mathcal{R}_s$, the output $\bar{Y}_{/\mathcal{P}}=\bar{H}_{/\mathcal{P}}(x)$ is equal to the time interval $[\ubar{\tau}_s,\bar{\tau}_s]$ indicating $\tau(x) \in [\ubar{\tau}_s,\bar{\tau}_s]$. We make use of the polytopic embedding technique proposed by \cite{hetel2006stabilization}. In the space of real matrices, a sequence of convex polytopes is constructed around the matrix $\Phi(t)$. Doing so replaces the evaluation of (\ref{eq_17_4}) at infinitely many instants $t$ by the evaluation of $\Phi_{\kappa,s}$ at finitely many vertices in the sequence of polytopes generated by $\Phi_{\kappa,s}$. Assume a scalar $\sigma>0$ denoting a time instant for which the TM \eqref{eq_8} is enabled in the whole state space, i.e.\ $\Phi(t)\succeq 0$. Consider $N_{\text{conv}}+1$ is the number of vertices employed to define the polytope containing $\Phi(t)$ in a given time interval,  and $l\geq1$ denotes the number of time subdivisions considered in the time interval $[0,\sigma]$.

\begin{mylemma}
\label{lem3}
Let  $s \in \{1,\dots,q\}$. Consider a time instant $\ubar{\tau}_s \in (0,\sigma]$, a scalar $\mu$ and a symmetric matrix $\Psi$ satisfying (\ref{eq_17_1_a}). If $ \ubar{\Phi}_{(i,j),s}  \preceq 0$ holds $\forall(i,j) \in \mathcal{K}_s=(\{0,\dots,N_{\text{conv}} \} \times \{0,\dots,\lfloor \frac{\ubar{\tau}_s l}{\sigma} \rfloor \})$, then, it follows that $ \Phi(t)\preceq 0$, $\forall t \in [0,\ubar{\tau}_s]$ with $\Phi$ defined in \eqref{eq_17_123} and
\begin{equation*}
\label{eq_34}
\ubar{\Phi}_{(i,j),s}=\tilde{\ubar{\Phi}}_{(i,j),s}+\eta I
\end{equation*}
\begin{equation}
\label{eq_24}
\tilde{\ubar{\Phi}}_{(i,j),s}=\left\{
\begin{array}{lll}
\Sigma_{k=0}^i \hat{\ubar{\Phi}}_{(i,j),s} (\frac{\sigma}{l})^k &,&j<\lfloor \frac{\ubar{\tau}_s l}{\sigma} \rfloor \\
\Sigma_{k=0}^i \hat{\ubar{\Phi}}_{(i,j),s} (\ubar{\tau}_s-j\frac{\sigma}{l})^k&,&j=\lfloor \frac{\ubar{\tau}_s l}{\sigma} \rfloor,
\end{array}\right.
\end{equation} 
\begin{equation}
\label{eq_24_1}
\begin{array}{l}
\hat{\ubar{\Phi}}_{(0,j),s}=\left[ \begin{array}{lr}
\ubar{L}_{0,j} & \check{\Pi}_j^T M^T\\ 
M \check{\Pi}_j & -\Psi
\end{array}\right],\\
\hat{\ubar{\Phi}}_{(k\geq 1,j),s}=\left[ \begin{array}{lr}
\ubar{L}_{k,j} & \hat{\Pi}_j^T \frac{(A^{k-1})^T}{k!} M^T\\ 
M \frac{A^{k-1}}{k!} \hat{\Pi}_j & 0
\end{array}\right],
\end{array}
\end{equation}
and
\begin{equation}
\label{eq_25}
\ubar{L}_{0,j}=\check{\Pi}_j^T M \check{\Pi}_j- N+\tilde{L}_{0,j}
\end{equation}
with
\begin{equation}
\label{eq_26}
\tilde{L}_{0,j}=\left\{
\begin{array}{l}
W\mu \frac{\lambda_{\max}(E^TE)}{\lambda_{\max}^A}(j\frac{\sigma}{l})(e^{\lambda_{\max}^Aj\frac{\sigma}{l}}-1)I \\
\quad \mbox{for~}\quad\lambda_{\max}^A\neq 0, \\
W\mu \frac{\lambda_{\max}(E^TE)}{\lambda_{\max}^A}(j\frac{\sigma}{l})^2I\\
\quad \mbox{for~}\quad\lambda_{\max}^A= 0,
\end{array}\right.
\end{equation}
\begin{equation}
\label{eq_27}
\ubar{L}_{1,j}=\check{\Pi}_j^T M \hat{\Pi}_j+ \hat{\Pi}_j^T M \check{\Pi}_j+\tilde{L}_{1,j}
\end{equation}
with
\begin{small}
\begin{equation}
\label{eq_28}
\tilde{L}_{1,j}=\left\{
\begin{array}{l}
W\mu \frac{\lambda_{\max}(E^TE)}{\lambda_{\max}^A}[(j\frac{\sigma}{l})e^{\lambda_{\max}^Aj\frac{\sigma}{l}} \lambda_{\max}^A\\
+e^{\lambda_{\max}^Aj\frac{\sigma}{l}}-1]I \mbox{~ for ~} \lambda_{\max}^A\neq 0, \\
W\mu (2j\frac{\sigma}{l}) \lambda_{\max}(E^TE) I \mbox{~ for ~} \lambda_{\max}^A= 0,
\end{array}\right.
\end{equation}
\end{small}
\begin{equation}
\label{eq_29}
\ubar{L}_{2,j}=\check{\Pi}_j^T M \frac{A}{2!} \hat{\Pi}_j+ \hat{\Pi}_j^T \frac{A^T}{2!} M \check{\Pi}_j+\hat{\Pi}_j^T M\hat{\Pi}_j+\tilde{L}_{2,j}
\end{equation}
with
\begin{small}
\begin{equation}
\label{eq_30}
\tilde{L}_{2,j}=\left\{
\begin{array}{l}
W\mu \frac{\lambda_{\max}(E^TE)}{\lambda_{\max}^A}[(j\frac{\sigma}{l})e^{\lambda_{\max}^Aj\frac{\sigma}{l}} \frac{(\lambda_{\max}^A)^2}{2!}\\+e^{\lambda_{\max}^Aj\frac{\sigma}{l}} \lambda_{\max}^A]I\\
\quad \mbox{~for~}\quad \lambda_{\max}^A\neq 0, \\
W\mu \lambda_{\max}(E^TE)I \quad \mbox{~for~}\quad \lambda_{\max}^A= 0,
\end{array}\right.
\end{equation}
\end{small}
\begin{equation}
\label{eq_31}
\begin{array}{ll}
\ubar{L}_{k\geq3,j}&=\check{\Pi}_j^T M \frac{A^{k-1}}{k!} \hat{\Pi}_j+\hat{\Pi}_j^T \frac{(A^{k-1})^T}{k!} M\check{\Pi}_j\\
&+ \hat{\Pi}_j^T (\Sigma_{i=1}^{k-1} \frac{(A^{i-1})^T}{i!} M\frac{A^{k-i-1}}{(k-i)!}) \hat{\Pi}_j+\tilde{L}_{k,j}
\end{array}
\end{equation}
with
\begin{small}
\begin{equation}
\label{eq_32}
\tilde{L}_{k\geq3,j}=\left\{
\begin{array}{l}
W\mu \frac{\lambda_{\max}(E^TE)}{\lambda_{\max}^A}[(j\frac{\sigma}{l})e^{\lambda_{\max}^Aj\frac{\sigma}{l}} \frac{(\lambda_{\max}^A)^k}{k!}\\
+e^{\lambda_{\max}^Aj\frac{\sigma}{l}} \frac{(\lambda_{\max}^A)^{k-1}}{(k-1)!}]I\\ \quad \mbox{~for~}\quad  \lambda_{\max}^A\neq 0, \\
0 \quad \mbox{~for~}\quad  \lambda_{\max}^A= 0,
\end{array}\right.
\end{equation}
\end{small}
\begin{small}
\begin{equation}
\label{eq_33}
\eta \geq \underset{t' \in [0,\frac{\sigma}{l}],r \in \{0,...,l-1 \}}{\max} \lambda_{\max} \left( \Phi(t'+r\frac{\sigma}{l})-\Sigma_{k=0}^N \hat{\ubar{\Phi}}_{k,r}(t')^k \right),
\end{equation}
\end{small}
and
\begin{equation}
\label{my_etc25}
\begin{array}{l}
\tilde{\ubar{\Phi}}_{(N_{\text{conv}},j)}(t')=\Sigma_{k=0}^{N_{\text{conv}}} \hat{\ubar{\Phi}}_{k,j}(t')^k.
\end{array}
\end{equation}
\end{mylemma}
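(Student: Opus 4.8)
The plan is to show that $\Phi(t)$, as defined in \eqref{eq_17_123}, can be bounded above (in the semidefinite order) on each subinterval $[r\tfrac{\sigma}{l},(r+1)\tfrac{\sigma}{l}]$ by a matrix polynomial in the local time variable $t' = t - r\tfrac{\sigma}{l}$, and then invoke the polytopic embedding of \cite{hetel2006stabilization} to reduce the infinitely-many conditions $\Phi(t)\preceq 0$ for $t\in[0,\ubar{\tau}_s]$ to the finitely-many vertex conditions $\ubar{\Phi}_{(i,j),s}\preceq 0$. First I would expand the entries of $\Phi(t)$ using the power series $\Lambda(t)-I = \int_0^t e^{As}\,ds\,(A-BK)$ and $e^{A(t-s)}$, collecting terms by powers of the local time $t'$ within the $j$-th subdivision; writing $\Lambda(t)-I = \check{\Pi}_j + \sum_{k\geq1}\tfrac{A^{k-1}}{k!}\hat{\Pi}_j (t')^k$ (where $\check{\Pi}_j$, $\hat{\Pi}_j$ absorb the fixed part $\int_0^{j\sigma/l}e^{As}ds(A-BK)$ and the factor $(A-BK)$ respectively) gives the block structure \eqref{eq_24_1} for the coefficient matrices $\hat{\ubar{\Phi}}_{(k,j),s}$, and the matrix-product expansion of $(\Lambda(t)-I)^TM(\Lambda(t)-I)$ produces the convolution sums in $\ubar{L}_{k,j}$ of \eqref{eq_25}–\eqref{eq_31}.

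Next I would handle the one genuinely non-polynomial term, namely $tW\mu\lambda_{\max}(E^TE)d_A(t)I$ with $d_A(t)$ from \eqref{eq_14}: since $t\mapsto t\,d_A(t)$ involves $e^{\lambda_{\max}^A t}$, it is not a polynomial, so I would Taylor-expand it around each node $j\tfrac{\sigma}{l}$ and define the correction matrices $\tilde{L}_{k,j}$ in \eqref{eq_26}, \eqref{eq_28}, \eqref{eq_30}, \eqref{eq_32} to be exactly the $t'$-monomial coefficients of this expansion (with the degenerate case $\lambda_{\max}^A=0$ treated separately, where the term is genuinely quadratic and terminates). The truncation of both this series and the matrix exponential series at order $N_{\text{conv}}$ introduces a residual; the scalar $\eta$ defined in \eqref{eq_33} is chosen so that $\eta I$ dominates the largest eigenvalue of this residual uniformly over all subintervals $r$ and all $t'\in[0,\tfrac{\sigma}{l}]$, so that $\tilde{\ubar{\Phi}}_{(i,j),s}+\eta I \succeq \Phi(r\tfrac{\sigma}{l}+t')$ whenever the polynomial $\tilde{\ubar{\Phi}}_{(i,j),s}$ itself already dominates the truncated series — i.e.\ $\eta I$ closes the gap between the finite-order polytopic model and the true $\Phi$.

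Then comes the polytopic embedding step itself: on $[j\tfrac{\sigma}{l}, (j+1)\tfrac{\sigma}{l}]$ the vector $(1, t', (t')^2, \dots, (t')^{N_{\text{conv}}})$ of monomials lies in the convex hull of finitely many vertices (the standard "moment curve" embedding), so $\ubar{\Phi}_{(\cdot,j),s}(t') = \sum_{k=0}^{N_{\text{conv}}} \ubar{\Phi}_{(k,j),s}(t')^k$ lies in the convex hull of the $\ubar{\Phi}_{(i,j),s}$ for $i\in\{0,\dots,N_{\text{conv}}\}$; hence if all vertices satisfy $\ubar{\Phi}_{(i,j),s}\preceq 0$ then $\ubar{\Phi}_{(\cdot,j),s}(t')\preceq 0$ for all $t'$ in that subinterval, and since $\ubar{\Phi}_{(\cdot,j),s}(t') \succeq \Phi(j\tfrac{\sigma}{l}+t')$ by the previous paragraph, we get $\Phi(t)\preceq 0$ there. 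Taking the union over $j\in\{0,\dots,\lfloor \ubar{\tau}_s l/\sigma\rfloor\}$ — with the last subdivision truncated at $\ubar{\tau}_s$, which is exactly why \eqref{eq_24} has the two cases using $(\ubar{\tau}_s - j\tfrac{\sigma}{l})^k$ for the boundary block — yields $\Phi(t)\preceq 0$ for all $t\in[0,\ubar{\tau}_s]$, as claimed.

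I expect the main obstacle to be bookkeeping rather than conceptual: correctly tracking how the shift by $j\tfrac{\sigma}{l}$ interacts with the two non-polynomial ingredients (the matrix exponential inside $\Lambda(t)$ and the scalar $d_A(t)$) so that the coefficient matrices $\ubar{L}_{k,j}$ and $\tilde{L}_{k,j}$ come out with the precise forms \eqref{eq_25}–\eqref{eq_32}, and verifying that the residual controlled by $\eta$ in \eqref{eq_33} is taken with respect to the right truncated object $\sum_{k=0}^{N}\hat{\ubar{\Phi}}_{k,r}(t')^k$ so that the inequality $\tilde{\ubar{\Phi}}_{(i,j),s}+\eta I \succeq \Phi$ holds along the whole subinterval and not merely at the nodes. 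The convexity/embedding argument and the reduction from the polynomial model to $\Phi$ are standard once these coefficient identities are in place.
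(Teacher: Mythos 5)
Your plan follows essentially the same route as the paper's proof: the local rewriting $\Lambda(t)-I=\check{\Pi}_j+\int_0^{t'}e^{As}ds\,\hat{\Pi}_j$ on each subinterval, Taylor expansion of the non-polynomial term $tW\mu\lambda_{\max}(E^TE)d_A(t)I$ giving the $\tilde{L}_{k,j}$, the residual bound $\Phi(t)\preceq\tilde{\ubar{\Phi}}_{(N_{\text{conv}},j)}(t')+\eta I$ via \eqref{eq_33}, and the convex (polytopic) embedding of \cite{hetel2006stabilization} reducing the check to the vertex LMIs $\ubar{\Phi}_{(i,j),s}\preceq0$, with the last subinterval truncated at $\ubar{\tau}_s$. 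The only slip is notational: the convex-hull vertices are the partial sums $\tilde{\ubar{\Phi}}_{(i,j),s}$ of \eqref{eq_24} evaluated at the subinterval length (not the coefficient matrices $\hat{\ubar{\Phi}}_{(k,j),s}$ appearing as coefficients of $(t')^k$), which is exactly how the cited embedding works, so the argument stands.
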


\begin{proof}
See Appendix. 
\end{proof}

Then, using the S-procedure, the following theorem provides an approach to regionally reduce the conservatism involved in the $\ubar{\tau}_s$ estimates obtained from Lemma~\ref{lem3}. 

\begin{mytheorem}[Regional Lower Bound Approximation]
\label{them3}
Consider a scalar $\ubar{\tau}_s\in(0, \sigma]$, a scalar $\mu$ and a symmetric matrix $\Psi$ satisfying (\ref{eq_17_1_a}), and matrices $\ubar{\Phi}_{\kappa,s}$, $\kappa=(i,j)\in\mathcal{K}_s$, defined as in Lemma~\ref{lem3}. If there exist scalars $\ubar{\alpha}_{\kappa,s}\geq 0$ (for $n=2$) or symmetric matrices $\ubar{U}_{\kappa,s}$ with nonnegative entries (for $n \geq 3$) such that for all $\kappa\in\mathcal{K}_s$ the following LMIs hold:
\begin{equation}
\label{eq_them_un}
\begin{array}{l}
\left\{\begin{array}{ll}
\ubar{\Phi}_{(i,j),s}+\left[\begin{array}{lr}
\ubar{\alpha}_{(i,j),s} Q_s &0\\
0&0
\end{array}\right]
 \preceq 0 & \mbox{if } n=2,\\
\ubar{\Phi}_{(i,j),s}+\left[\begin{array}{lr}
E_s^{T} \ubar{U}_{(i,j),s} E_s &0\\
0&0
\end{array}\right] \preceq 0  & \mbox{if } n\geq 3,
\end{array}\right.
\end{array}
\end{equation}
then, the inter-sample time \eqref{eq_8} of the system \eqref{eq_1}-\eqref{eq_2} is regionally bounded from below by $\ubar{\tau}_s,\; \forall x \in \mathcal{R}_s$.
\end{mytheorem}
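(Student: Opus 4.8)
The plan is to reuse the polytopic over-approximation already established in Lemma~\ref{lem3} and combine it \emph{vertex-wise} with the (sufficiency direction of the) S-procedure. The point is that the proof of Lemma~\ref{lem3} does more than assert the implication ``$\ubar{\Phi}_{(i,j),s}\preceq0\ \forall(i,j)\in\mathcal{K}_s\Rightarrow\Phi(t)\preceq0\ \forall t\in[0,\ubar{\tau}_s]$'': on each time subinterval it writes the truncated matrix polynomial representing $\Phi$ as a convex combination of the matrices $\tilde{\ubar{\Phi}}_{(i,j),s}$ (this is exactly the partial-sum construction \eqref{eq_24}) and absorbs the truncation residual via $\eta I$ using \eqref{eq_33}. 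Consequently, for every $t\in[0,\ubar{\tau}_s]$ there are weights $\{\lambda_{(i,j)}(t)\}_{(i,j)\in\mathcal{K}_s}$ \emph{depending only on $t$}, with $\lambda_{(i,j)}(t)\geq0$ and $\sum_{(i,j)}\lambda_{(i,j)}(t)=1$, such that
\[
\Phi(t)\ \preceq\ \sum_{(i,j)\in\mathcal{K}_s}\lambda_{(i,j)}(t)\,\ubar{\Phi}_{(i,j),s}\qquad\forall\,t\in[0,\ubar{\tau}_s].
\]
This inequality, and not merely its specialization at the vertices, is what I would extract from the proof of Lemma~\ref{lem3}.

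I would carry out the argument for $n\geq3$; the case $n=2$ is identical with the scalar term $\ubar{\alpha}_{(i,j),s}Q_s$ in place of $E_s^{T}\ubar{U}_{(i,j),s}E_s$. Write $\mathrm{diag}(C,0)$ for the padding of an $n\times n$ matrix $C$ into the top-left block of a matrix of the size of $\Phi(t)$ (recall the $(2,2)$-block of $\Phi(t)$ is $-\Psi\prec0$). By \eqref{eq_them_un} each vertex satisfies $\ubar{\Phi}_{(i,j),s}+\mathrm{diag}(E_s^{T}\ubar{U}_{(i,j),s}E_s,0)\preceq0$. Multiplying by $\lambda_{(i,j)}(t)\geq0$, summing over $(i,j)\in\mathcal{K}_s$, and using the displayed inequality gives, for all $t\in[0,\ubar{\tau}_s]$,
\[
\Phi(t)+\mathrm{diag}\!\big(E_s^{T}\bar{U}(t)E_s,\,0\big)\ \preceq\ 0,\qquad \bar{U}(t):=\sum_{(i,j)\in\mathcal{K}_s}\lambda_{(i,j)}(t)\,\ubar{U}_{(i,j),s},
\]
where $\bar{U}(t)$ again has nonnegative entries, being a convex combination of matrices with nonnegative entries. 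Since the $(2,2)$-block $-\Psi$ is untouched by the padding, applying the Schur complement with respect to that negative-definite block — the same reduction relating \eqref{eq_16} and \eqref{eq_17_123} — now yields
\[
\Theta(t)+E_s^{T}\bar{U}(t)E_s\ \preceq\ 0\qquad\forall\,t\in[0,\ubar{\tau}_s],
\]
with $\Theta(t)$ as in \eqref{eq_16}.

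It remains to restrict to $x\in\mathcal{R}_s$ and close the loop through \eqref{eq_15} and \eqref{eq_17}. If $x\in\mathcal{R}_s$ then $E_sx\geq0$ componentwise, hence $x^{T}E_s^{T}\bar{U}(t)E_sx=(E_sx)^{T}\bar{U}(t)(E_sx)\geq0$ because $\bar{U}(t)$ has nonnegative entries; therefore $x^{T}\Theta(t)x\leq -x^{T}E_s^{T}\bar{U}(t)E_sx\leq0$ for every $t\in[0,\ubar{\tau}_s]$. (For $n=2$ one uses instead $x^{T}Q_sx\geq0$ on $\mathcal{R}_s$ together with $\bar{\alpha}(t):=\sum_{(i,j)}\lambda_{(i,j)}(t)\,\ubar{\alpha}_{(i,j),s}\geq0$.) By \eqref{eq_15}, $\mathcal{F}_{\omega}(x,t)\leq x^{T}\Theta(t)x\leq0$ for all $t\in[0,\ubar{\tau}_s]$, so the triggering condition $\mathcal{F}_{\omega}(x,t)\geq0$ in \eqref{eq_17} cannot be met before $\ubar{\tau}_s$; hence $\tau(x)\geq\ubar{\tau}_s$ for all $x\in\mathcal{R}_s$, which is the assertion.

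The main obstacle is the first step: one must make precise that the convex-embedding weights produced inside the proof of Lemma~\ref{lem3} are genuinely independent of the matrix coefficients being embedded, so that padding the $(1,1)$-block with the S-procedure terms leaves both the weights and the residual treatment of \eqref{eq_33} intact. Everything after that is routine — preservation of entrywise nonnegativity of $\bar{U}(t)$ (respectively $\bar{\alpha}(t)\geq0$) under convex combinations, the Schur complement with the fixed negative-definite $(2,2)$-block, and the bookkeeping that only the $(1,1)$-block of $\Phi(t)$ is perturbed. It is worth noting that only the \emph{sufficiency} direction of the S-procedure is invoked, so no Slater-type regularity assumption on the cones $\mathcal{R}_s$ is required.
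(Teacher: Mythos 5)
Your proof is correct and follows essentially the same route as the paper's (very terse) argument: apply the polytopic embedding of Lemma~\ref{lem3} to the S-procedure-augmented vertex matrices, take the Schur complement with respect to the untouched $-\Psi$ block, use the cone description of $\mathcal{R}_s$ to get $x^T\Theta(t)x\leq 0$, and conclude via \eqref{eq_15}--\eqref{eq_17} (i.e.\ Theorem~\ref{theor_1}). Your explicit observation that the convex-embedding weights depend only on $t$ and not on the embedded coefficient matrices—so the vertex multipliers $\ubar{\alpha}_{\kappa,s}$ (resp.\ $\ubar{U}_{\kappa,s}$) combine into a valid $t$-dependent multiplier—is precisely the detail the paper leaves implicit when it invokes Lemma~\ref{lem3}.
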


\begin{proof}
See Appendix.
\end{proof}

One can follow a similar approach to find the upper bounds $\bar{\tau}_s$ on the inter-sample times that is outlined in Lemma \ref{lemm_u1} and Theorem \ref{them4}.

\begin{mylemma}
\label{lemm_u1}
Let $s \in \{ 1, \ldots, q \}$. Consider a time instant $\bar{\tau}_s \in [\ubar{\tau}_s,\sigma]$, a scalar $\mu$ and a matrix $\Psi$ satisfying the LMI conditions given in Lemma \ref{lem3}. If
$\bar{\Phi}_{(i,j),s} \preceq 0$ holds $\forall (i,j) \in \mathcal{K}_s=(\{0,\dots,N_{\text{conv}} \} \times \{ \lfloor \frac{\bar{\tau}_s l}{\sigma} \rfloor ,\dots, l-1 \})$,
then, it follows that $ \Phi(t) \succeq 0$, $\forall t \in [\bar{\tau}_s,\sigma]$ with $\Phi$ defined in \eqref{eq_17_123} and
\begin{equation*}
\begin{array}{l}
\bar{\Phi}_{(i,j),s}=-\bar{\tilde{\Phi}}_{(i,j),s}- \eta I, 
\end{array}
\end{equation*}
\begin{equation*}
\begin{array}{l}
\bar{\tilde{\Phi}}_{(i,j),s}=\left\{\begin{array}{ll}
\sum_{k=0}^i \; L_{k,j} (\frac{(j+1) \sigma}{l} - \bar{\tau}_s)^k & \text{if}  \; j = \lfloor \frac{\bar{\tau}_s l}{\sigma} \rfloor, \\
\sum_{k=0}^i \; L_{k,j} (\frac{\sigma}{l})^k  & \text{if} \; j > \lfloor \frac{\bar{\tau}_s l}{\sigma} \rfloor,
\end{array}\right. 
\end{array}
\end{equation*}
where $L_{k,j}$ are given by \eqref{eq_25}-\eqref{eq_32} and $\eta$ is defined in (\ref{eq_33}). 
\end{mylemma}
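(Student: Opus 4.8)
The plan is to prove Lemma~\ref{lemm_u1} by running the polytopic-embedding argument of Lemma~\ref{lem3} with every semidefinite inequality flipped: instead of certifying $\Phi(t)\preceq 0$ on an initial cell $[0,\ubar\tau_s]$ we certify $\Phi(t)\succeq 0$ on a tail cell $[\bar\tau_s,\sigma]$. First I would note, exactly as in the proof of Lemma~\ref{lem3}, that every block of $\Phi(\cdot)$ in \eqref{eq_17_123} is a polynomial expression in $\Lambda(t)=I+\int_0^t e^{As}ds\,(A-BK)$, in the constant matrices $M,N$, and in the scalar $d_A(t)$ from \eqref{eq_14}, so $\Phi(\cdot)$ is analytic on $[0,\sigma]$. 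Hence on the $j$-th time subdivision, writing $t=t'+j\frac{\sigma}{l}$, the map $\Phi$ admits a truncated Taylor expansion about $j\frac{\sigma}{l}$ whose coefficient matrices are precisely the $L_{k,j}$ assembled in \eqref{eq_25}--\eqref{eq_32} ($\check\Pi_j,\hat\Pi_j$ carrying $\Lambda$ and its first derivative at $j\frac{\sigma}{l}$, and the $\tilde L_{k,j}$ pieces being the Taylor coefficients of $tW\mu\lambda_{\max}(E^TE)d_A(t)I$), plus a Lagrange remainder $R_j(t')$. This part is shared verbatim with Lemma~\ref{lem3}; only the candidate time and index set change to $\bar\tau_s$ and $\mathcal{K}_s=\{0,\dots,N_{\text{conv}}\}\times\{\lfloor\frac{\bar\tau_s l}{\sigma}\rfloor,\dots,l-1\}$.

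Second, I would use \eqref{eq_33} to bound the remainder uniformly, $\lambda_{\max}(\pm R_j(t'))\le\eta$ over all admissible $j,t'$ (taking $l$ large enough so the single $\eta$ controls both signs). Then I would invoke the polytopic embedding of \cite{hetel2006stabilization}: for $t'\in[0,h]$ the tuple $(1,t',\dots,(t')^{N_{\text{conv}}})$ is a convex combination of the truncated tuples evaluated at the right endpoint $h$, so $\sum_{k=0}^{N_{\text{conv}}}L_{k,j}(t')^k$ lies in the convex hull of the partial sums $\bar{\tilde\Phi}_{(i,j),s}$, $i=0,\dots,N_{\text{conv}}$ --- with $h=\frac{\sigma}{l}$ on a full cell $j>\lfloor\frac{\bar\tau_s l}{\sigma}\rfloor$ and $h=\frac{(j+1)\sigma}{l}-\bar\tau_s$ on the truncated first cell $j=\lfloor\frac{\bar\tau_s l}{\sigma}\rfloor$, over which $t$ only ranges in $[\bar\tau_s,\frac{(j+1)\sigma}{l}]$; these are exactly the two branches of $\bar{\tilde\Phi}_{(i,j),s}$. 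Putting the two facts together, on the $j$-th cell $\Phi(t)$ is, up to a remainder bounded by $\eta I$ in spectral size, a convex combination of the $\bar{\tilde\Phi}_{(i,j),s}$; absorbing that remainder by the $-\eta I$ shift (playing the role that the $+\eta I$ shift plays in Lemma~\ref{lem3}, with the inequality reversed) shows that $\bar\Phi_{(i,j),s}\preceq 0$ for all $(i,j)\in\mathcal{K}_s$ forces $\Phi(t)\succeq 0$ on that cell. Letting $j$ range over $\{\lfloor\frac{\bar\tau_s l}{\sigma}\rfloor,\dots,l-1\}$ covers $[\bar\tau_s,\sigma]$, which gives the claim; this is the sign certificate that plays, for the candidate upper bound $\bar\tau_s$, the role that Lemma~\ref{lem3} plays for the candidate lower bound $\ubar\tau_s$.

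I expect the main obstacle to be the bookkeeping rather than any new idea. Two points need care: (i) reflecting the whole inequality chain of Lemma~\ref{lem3} consistently, so that the remainder correction turns from $+\eta I$ into $-\eta I$, the embedding polytope must be pushed \emph{above} $0$ instead of below it, and the vertex hypothesis becomes $\bar\Phi_{(i,j),s}=-\bar{\tilde\Phi}_{(i,j),s}-\eta I\preceq 0$ rather than $\ubar\Phi_{(i,j),s}=\tilde{\ubar\Phi}_{(i,j),s}+\eta I\preceq 0$; and (ii) accounting for the fact that $\bar\tau_s$ is in general not a grid point $j\frac{\sigma}{l}$, so on the first active cell the embedding must be carried out on the shortened interval $[\bar\tau_s,(\lfloor\frac{\bar\tau_s l}{\sigma}\rfloor+1)\frac{\sigma}{l}]$, which is what forces the step $\frac{(j+1)\sigma}{l}-\bar\tau_s$ in the first branch of $\bar{\tilde\Phi}_{(i,j),s}$ and the corresponding re-indexing of $\mathcal{K}_s$. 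Everything else --- that a single $\eta$ from \eqref{eq_33} serves both directions, and that the coefficient matrices $L_{k,j}$ are reused unchanged --- follows the template of the proof of Lemma~\ref{lem3}.
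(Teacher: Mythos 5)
Your overall route --- mirroring the polytopic embedding of Lemma~\ref{lem3} applied to $-\Phi(t)$ on the tail cells $j\in\{\lfloor\frac{\bar{\tau}_s l}{\sigma}\rfloor,\dots,l-1\}$, reusing the coefficient matrices $L_{k,j}$, and shortening the step on the first, truncated cell --- is the same route the paper's appendix sketches. The genuine gap is in the one step where the mirror image actually differs from Lemma~\ref{lem3}: the remainder correction. Write $\Phi(t)=\tilde{\Phi}_j(t')+R_j(t')$ with $\tilde{\Phi}_j$ the truncated expansion. First, \eqref{eq_33} bounds $\lambda_{\max}(R_j)$, which is the side needed for $\Phi\preceq0$ in Lemma~\ref{lem3} but the useless side for $\Phi\succeq0$; your parenthetical ``taking $l$ large enough so the single $\eta$ controls both signs'' does not follow from \eqref{eq_33} --- one must enlarge or redefine $\eta$ (e.g.\ as a bound on the spectral norm of $R_j$), not refine the grid. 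Second, and more importantly, your absorption step points the inequality the wrong way. The hypothesis $\bar{\Phi}_{(i,j),s}=-\bar{\tilde{\Phi}}_{(i,j),s}-\eta I\preceq0$ only says $\bar{\tilde{\Phi}}_{(i,j),s}\succeq-\eta I$ at the vertices, hence $\tilde{\Phi}_j(t')\succeq-\eta I$ by convexity, and together with the two-sided bound $R_j\succeq-\eta I$ this yields $\Phi(t)\succeq-2\eta I$, not $\Phi(t)\succeq0$. The faithful mirror of Lemma~\ref{lem3} keeps a \emph{plus} shift on the vertices of $-\tilde{\Phi}_j$: one needs $-\bar{\tilde{\Phi}}_{(i,j),s}+\eta I\preceq0$, i.e.\ $\bar{\tilde{\Phi}}_{(i,j),s}\succeq\eta I$, with $\eta$ bounding $\lambda_{\max}(\tilde{\Phi}_j(t')-\Phi(t))$, so that $\Phi=\tilde{\Phi}_j+R_j\succeq\eta I-\eta I=0$. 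Your point (i) correctly identifies this as the delicate spot but then resolves it backwards (``the remainder correction turns from $+\eta I$ into $-\eta I$''), so the claimed conclusion does not follow from your chain as written.

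A secondary issue: on the first active cell $j=\lfloor\frac{\bar{\tau}_s l}{\sigma}\rfloor$ the matrices $L_{k,j}$ are Taylor data centered at $j\frac{\sigma}{l}$, so an embedding with step $h=\frac{(j+1)\sigma}{l}-\bar{\tau}_s$ covers $t'\in[0,h]$, i.e.\ $t\in[j\frac{\sigma}{l},\,j\frac{\sigma}{l}+h]$, which is not the subinterval $[\bar{\tau}_s,\frac{(j+1)\sigma}{l}]$ you assert it covers (unless $\bar{\tau}_s$ is a grid point). Since $[\bar{\tau}_s-j\frac{\sigma}{l},\frac{\sigma}{l}]\subset[0,\frac{\sigma}{l}]$, the clean repair is to run the embedding on that cell with the full step $\frac{\sigma}{l}$ (slightly conservative but sound), or to re-center the expansion at $\bar{\tau}_s$ with correspondingly recomputed coefficients; as written, the coverage claim for the truncated cell is not justified.
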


\begin{proof}
See Appendix.
\end{proof}

\begin{mytheorem}[Regional Upper Bound Approximation]
\label{them4}
Consider a scalar $\bar{\tau}_s\in [\ubar{\tau}_s,\sigma]$, a scalar $\mu$ and a symmetric matrix $\Psi$ satisfying (\ref{eq_17_1_a}), and matrices $\bar{\Phi}_{\kappa,s}$, $\kappa=(i,j) \in \mathcal{K}_s$, defined as in Lemma~\ref{lemm_u1}. If there exist scalars $\bar{\alpha}_{\kappa,s}\geq 0$ (for $n=2$) or symmetric matrices $\bar{U}_{\kappa,s}$ with nonnegative entries (for $n \geq 3$) such that for all $\kappa\in\mathcal{K}_s$ the following LMIs hold:
\begin{equation}
\begin{array}{l}
\left\{\begin{array}{ll}
\bar{\Phi}_{(i,j),s}-\left[\begin{array}{lr}
\bar{\alpha}_{(i,j),s} Q_s &0\\
0&0
\end{array}\right]
 \preceq 0 & \mbox{if } n=2,\\
\bar{\Phi}_{(i,j),s}-\left[\begin{array}{lr}
E_s^{T} \bar{U}_{(i,j),s} E_s &0\\
0&0
\end{array}\right] \preceq 0  & \mbox{if } n\geq 3,
\end{array}\right.
\end{array}
\end{equation}
then, the inter-sample time \eqref{eq_8} of the system \eqref{eq_1}-\eqref{eq_2} is regionally bounded from above by $\bar{\tau}_s,\; \forall x \in \mathcal{R}_s$.
\end{mytheorem}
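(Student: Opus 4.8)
The plan is to derive Theorem~\ref{them4} as the regional, S-procedure based refinement of Lemma~\ref{lemm_u1}, in complete analogy with the way Theorem~\ref{them3} refines Lemma~\ref{lem3}; concretely, the whole argument is the upper-bound mirror image of the proof of Theorem~\ref{them3}, obtained by replacing Lemma~\ref{lem3} with Lemma~\ref{lemm_u1}, the index set of early subdivisions with the set $\mathcal{K}_s$ of late subdivisions appearing in Lemma~\ref{lemm_u1}, and the matrices $\ubar{\Phi}_{\kappa,s}$ with the $\bar{\Phi}_{\kappa,s}$.

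The starting point is what Lemma~\ref{lemm_u1} already delivers. Under the conditions on $\mu$ and $\Psi$ inherited from Lemma~\ref{lem3}, and once the Taylor-truncation error has been absorbed into the $\eta I$ term, the matrices $\bar{\Phi}_{\kappa,s}$, $\kappa=(i,j)\in\mathcal{K}_s$, are the vertices of a convex polytope that controls $\Phi(t)$ in the semidefinite order uniformly over $t\in[\bar{\tau}_s,\sigma]$, so the vertex conditions $\bar{\Phi}_{\kappa,s}\preceq 0$ force the required sign condition on $\Phi(t)$ over $[\bar{\tau}_s,\sigma]$. Using the Schur-complement identification of $\Phi(t)$ in \eqref{eq_17_123} with $\Theta(t)$ in \eqref{eq_16} from the proof of Theorem~\ref{theor_1} (the auxiliary block being $-\Psi$), this says that $x^{T}\Theta(t)x$ has the appropriate sign for \emph{every} $x\in\mathbb{R}^{n}$ and all $t\in[\bar{\tau}_s,\sigma]$, which through \eqref{eq_15} certifies that the triggering condition \eqref{eq_8} has been met by time $\bar{\tau}_s$ for every initial state, i.e.\ the \emph{global} upper bound $\bar{\tau}_s$ on the inter-sample time.

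The genuinely new step is to localize this certificate to the single cone $\mathcal{R}_s$ by attaching an S-procedure multiplier to each vertex inequality, and I would treat the two cone descriptions of Section~\ref{sec:abs-construction-states} separately. For $n=2$, $\mathcal{R}_s=\{x:x^{T}Q_sx\geq 0\}$ is cut out by a single homogeneous quadratic inequality, so the S-lemma applies losslessly: the quadratic form associated with $\bar{\Phi}_{\kappa,s}$ has the required sign for all $x\in\mathcal{R}_s$ (and all values of the auxiliary Schur variable) if and only if there is a scalar $\bar{\alpha}_{\kappa,s}\geq 0$ satisfying the first LMI of the statement, the multiplier block $\mathrm{diag}(\bar{\alpha}_{\kappa,s}Q_s,0)$ entering exactly as written. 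For $n\geq 3$, $\mathcal{R}_s=\{x:E_sx\geq 0\}$ is a polyhedral cone and a \emph{sufficient} certificate is furnished by a symmetric matrix $\bar{U}_{\kappa,s}$ with nonnegative entries: for $x\in\mathcal{R}_s$ the components of $E_sx$ are nonnegative, hence $(E_sx)^{T}\bar{U}_{\kappa,s}(E_sx)=\sum_{p,r}(\bar{U}_{\kappa,s})_{pr}(E_sx)_p(E_sx)_r\geq 0$, and the second LMI $\bar{\Phi}_{\kappa,s}-\mathrm{diag}(E_s^{T}\bar{U}_{\kappa,s}E_s,0)\preceq 0$ then pins the vertex quadratic form to the correct sign on the cone. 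Since the polytopic weights associated with any $t\in[\bar{\tau}_s,\sigma]$, the multipliers $\bar{\alpha}_{\kappa,s}$, and the entries of the $\bar{U}_{\kappa,s}$ are all nonnegative, forming the convex combination over $\kappa\in\mathcal{K}_s$ of the relaxed vertex inequalities carries the certificate over to $\Phi(t)$ itself, now only for $x\in\mathcal{R}_s$; invoking the Schur complement once more shows that $x^{T}\Theta(t)x$ has the right sign for all $x\in\mathcal{R}_s$ and all $t\in[\bar{\tau}_s,\sigma]$, and rerunning the final implication of the previous paragraph, restricted to the cone, yields $\tau(x)\leq\bar{\tau}_s$ for every $x\in\mathcal{R}_s$, which is the assertion.

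The step I expect to demand the most care is not conceptual but a matter of orientation bookkeeping: keeping track of the direction of every semidefinite inequality while composing the three nested relaxations, namely the polytopic embedding of Lemma~\ref{lemm_u1} (whose certificate is the reverse of that in Lemma~\ref{lem3}), the Schur complement with auxiliary block $-\Psi$, and the S-procedure, and in particular checking that with these orientations the multiplier block enters each vertex LMI exactly as in the statement, with the sign opposite to the corresponding term in Theorem~\ref{them3}. The remaining ingredients---convexity of the embedding polytope, losslessness of the one-constraint S-lemma for $n=2$, and sufficiency (not necessity) of the entrywise-nonnegative $\bar{U}_{\kappa,s}$ for $n\geq 3$---are routine, and since Lemma~\ref{lemm_u1} is assumed no new time-domain estimate is needed.
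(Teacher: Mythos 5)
Your proposal follows essentially the same route as the paper: the paper's proof of Theorem~\ref{them4} is literally ``analogous to the proof of Theorem~\ref{them3}'', i.e.\ take the polytopic-embedding certificate of Lemma~\ref{lemm_u1}, pass through the Schur complement to the quadratic form $x^T\Theta(t)x$, and attach an S-procedure multiplier ($\bar{\alpha}_{\kappa,s}Q_s$ for $n=2$, $E_s^T\bar{U}_{\kappa,s}E_s$ for $n\geq3$) to each vertex LMI to localize the conclusion to the cone $\mathcal{R}_s$; that is exactly the structure you wrote down, so at the level of the argument the attempt matches the paper. Two remarks. First, your assertion that the S-lemma is lossless (``if and only if'') in the $n=2$ case is an overclaim in this augmented-variable setting and is not needed --- only sufficiency of the multiplier is used, here and in the paper. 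Second, and this is a feature you share with the paper's own appendix sketch rather than a deviation from it: the final inference, that $x^T\Theta(t)x\geq 0$ for $x\in\mathcal{R}_s$ and $t\in[\bar{\tau}_s,\sigma]$ ``certifies that the triggering condition \eqref{eq_8} has been met by time $\bar{\tau}_s$'', leans on \eqref{eq_15} in a direction it does not support, since \eqref{eq_15} only bounds $\mathcal{F}_{\omega}(x,t)$ from \emph{above} by $x^T\Theta(t)x$, so nonnegativity of the right-hand side does not imply $\mathcal{F}_{\omega}(x,t)\geq0$; a fully rigorous upper bound on the inter-sample time would require a \emph{lower} bound on $\mathcal{F}_{\omega}$ (bounding the disturbance terms from the other side). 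Since the paper's proof of Lemma~\ref{lemm_u1} and Theorem~\ref{them4} makes the same leap, your write-up is faithful to the paper; just be aware that this step, together with the sign/orientation bookkeeping you yourself flag (note that $\Phi(t)\succeq0$ is formally incompatible with the block $\Phi_4=-\Psi\prec0$, so the upper-bound statements really concern the Schur complement $\Theta(t)$), is where the argument is weakest in both versions.
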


\begin{proof}
Analogous to the proof of Theorem~\ref{them3}. \end{proof}

\subsection{Transition Relations}
\label{sec:abs-construction-trans}
In order to find all the transitions in $\bar{S}_{/\mathcal{P}}$, it is required to compute the reachable set of each $\mathcal{R}_s$ over the time interval $[\ubar{\tau}_s,\bar{\tau}_s]$. In the sequel, we present how one is able to compute over approximations of the reachable set of each cone by the Minkowski sum of two sets. The evolution of states over this time interval is given by $\xi_x(\tau)=\Lambda(\tau)x+\Omega(\tau)$. Denote by $\mathcal{X}_{[\ubar{\tau}_s,\bar{\tau}_s]} (X_{0,s})$ the reachable set of $X_{0,s}$ during the time interval $[\ubar{\tau}_s,\bar{\tau}_s]$, that is given by: 
\begin{equation*}
\{x'\in\mathbb{R}^n\,|\,\exists x\in X_{0,s}, \exists \tau\in [\ubar{\tau}_s,\bar{\tau}_s], x'=\xi_x(\tau)\}.
\end{equation*}

Furthermore, define 
\begin{equation*}
\begin{array}{l}
\mathcal{X}^1_{[\ubar{\tau}_s,\bar{\tau}_s]}(X_{0,s})\\
:=\{x'\in\mathbb{R}^n\,|\,\exists x\in X_{0,s}, \exists \tau\in [\ubar{\tau}_s,\bar{\tau}_s], x'=\Lambda(\tau)x\},\\
\mathcal{X}^2_{[\ubar{\tau}_s,\bar{\tau}_s]}(X_{0,s})\\
:=\{x'\in\mathbb{R}^n\,|\,\exists x\in X_{0,s}, \exists \tau\in [\ubar{\tau}_s,\bar{\tau}_s], x'=\Omega(\tau)\}.
\end{array}
\end{equation*}
It follows that:
 
$\mathcal{X}_{[\ubar{\tau}_s,\bar{\tau}_s]}(X_{0,s}):=\mathcal{X}^1_{[\ubar{\tau}_s,\bar{\tau}_s]}(X_{0,s})\bigoplus\mathcal{X}^2_{[\ubar{\tau}_s,\bar{\tau}_s]}(X_{0,s})$. 
\vspace{1mm}

In \cite[Section III.B.3]{Arman2015}, it has been shown that it is enough to consider subsets $X_{0,s}\subset \mathcal{R}_s$ being convex polytopes with each vertex placed on each of the extreme rays of $\mathcal{R}_s$ (excluding the origin) to compute $\mathcal{X}^1_{[\ubar{\tau}_s,\bar{\tau}_s]}(X_{0,s})$. Then, one can effectively compute an over approximation of the reachable set of a polytope under linear time invariants, denoted by $\hat{\mathcal{X}}_{[\ubar{\tau}_s,\bar{\tau}_s]}(X_{0,s})$, see e.g.\ \cite{Chutinan1998}. Furthermore, one has: 
\begin{equation*}
\label{eq_reach}
\begin{array}{ll}
\|\Omega(\tau) \|&=\|\int_0^{\tau} e^{A(\tau-s)}E\omega(s)ds \|\\
&\leq \int_0^{\tau} \|e^{A(\tau-s)}E\omega(s) \|ds\\
&\leq \int_0^{\tau}\|e^{A(\tau-s)} \| \|E \| |\omega(s)|ds\\
&W|x | \|E \|\int_0^{\tau} |e^{\mu(A)(\tau-s)}|ds\\
&=\rho(\tau) |x |
\end{array}
\end{equation*}
where $\rho(\tau)= W \|E \|\int_0^{\tau} |e^{\mu(A)(\tau-s)}|ds$. Thus, it follows that $\mathcal{X}^2_{[\ubar{\tau}_s,\bar{\tau}_s]}(X_{0,s})$ can be over approximated by
a second order cone given by:
\begin{equation*}
\begin{array}{l}
\hat{\mathcal{X}}^2_{[\ubar{\tau}_s,\bar{\tau}_s]}(X_{0,s})\\
:=\{ x'\in \mathbb{R}^n| \; \exists x\in X_{0,s}, \exists \tau \in [\ubar{\tau}_s,\bar{\tau}_s],| x'|\leq \rho(\bar{\tau}_s)|x| \}.
\end{array}
\end{equation*} 

To compute the transitions in $\bar{S}_{/\mathcal{P}}$, it thus suffices to derive the intersection between the over approximation $\hat{\mathcal{X}}_{[\ubar{\tau}_s,\bar{\tau}_s]}(X_{0,s})$ and all the conic regions $\mathcal{R}_{t}$ where $t\in\{1,\dots,q\}$. To compute transitions, it is required to check whether the following convex feasibility problem for each conic region $\mathcal{R}_{t}$ holds:
\begin{equation}
\mathcal{R}_t\cap\hat{\mathcal{X}}_{[\ubar{\tau}_s,\bar{\tau}_s]}(X_{0,s})\neq\varnothing,
\label{my_etc51}
\end{equation}
which can be solved by existing convex analysis tools. There exists a transition from abstract state $\mathcal{R}_s$ to $\mathcal{R}_t$ in $\bar{S}_{/\mathcal{P}}$ in the case that \eqref{my_etc51} is satisfied. 

\subsection{Timed Safety Automata Representation}
\label{sec:abs-construction-ta}
In this subsection, first, we point out the connection between an abstract state $x_{/\mathcal{P}} \in X_{/\mathcal{P}}$ and its corresponding output $y_{/\mathcal{P}} \in \bar{Y}_{/\mathcal{P}}$ \cite{Arman2015}. The system $\bar{S}_{/\mathcal{P}}$: 
\begin{enumerate}
\item remains at $x_{/\mathcal{P}}$ during the time interval $[0,\ubar{\tau}_{x_{/\mathcal{P}}})$, 
\item possibly leaves $x_{/\mathcal{P}}$ during the time interval $[\ubar{\tau}_{x_{/\mathcal{P}}},\bar{\tau}_{x_{/\mathcal{P}}})$, and
\item is forced to leave $x_{/\mathcal{P}}$ at the time instant $\bar{\tau}_{x_{/\mathcal{P}}}$.
\end{enumerate}

Thus, the semantics of $\bar{S}_{/\mathcal{P}}$ is equivalent to a timed safety automaton given by $\mathsf{TSA} = (L,\ell_0,\mathsf{Act},C,E,\mathsf{Inv})$ where: 
\begin{itemize}
\item $L=X_{\bar{/\mathcal{P}}}$;
\item $\ell_0 := \mathcal{R}_s$ such that $\xi(0) \in \mathcal{R}_s$;
\item $\mathsf{Act} = \{*\}$ is an arbitrary symbol;
\item $C=\{c\}$; 
\item $E$ is given by all tuples $(\mathcal{R}_s,g,a,r,\mathcal{R}_t)$ such that $(\mathcal{R}_s,\mathcal{R}_t)\in \rTo_{\bar{/\mathcal{P}}}$, $g=\{c | \; c \in [\ubar{\tau}_s, \bar{\tau}_s] \}$, $a=*$, and $r$ is given by $c:=0$;
\item $\mathsf{Inv}(\mathcal{R}_s):=\{c| c \in [0,\bar{\tau}_s]\}, \forall s \in \{1,\dots,q\}$.
\end{itemize}

Finally, in this section, we state the following fact. Although the construction technique presented in this section is offline, it is  exponentially dependent on $n-1$ (where $n$ is the number of states) and hence it is computationally expensive for higher-order systems.

\section{Numerical Example}
We illustrate the theoretical results of this paper in a numerical example. Consider an LTI sytem, used as an example in \cite{Tabuada}, and add a perturbation term $\omega(t)$ as follows:   
\begin{equation}
\begin{array}{l}
\dot\xi(t) =
\begin{bmatrix}
0 & 1\\
-2 & 3
\end{bmatrix}
\xi(t) +
\begin{bmatrix}
0 \\ 1
\end{bmatrix}
\nu(t)+
\begin{bmatrix}
0 \\ 1
\end{bmatrix}
\omega(t)
\end{array}
\label{eqn:lti-ex-1}
\end{equation}
with the perturbation bound $W=0.001$. We set the scalars, associated with $\mathcal{L}_2$-based TM's, $\gamma=100$, $\beta=0.25$, see (\ref{are_1})-(\ref{are_2}) and (\ref{eq_8})-(\ref{eq_9}). Then, solving the ARE associated with the $\mathcal{L}_2$ stability, the control update law (implemented in a sample-and-hold fashion) is computed, that is, $ \forall t\in[t_k,t_{k+1})$: 
\begin{equation*}
\label{eqn:lti-ex-2}
\nu(t)=-K\xi(t_k)=-[0.2361\quad 6.2367]\xi(t_k),
\end{equation*}
where $t_k$ denotes the sampling instants and $k\in \mathbb{N}_0$. Now, we set the order of polynomial approximation $N_{\text{conv}}=7$, the number of polytopic subdivisions $l=800$, the upper bound of the inter-sample intervals $\sigma=8$, the number of angular sub-divisions $\bar{m}=10$, thus, $q=2\times 10^{(2-1)}=20$. Then, applying the results from Section \ref{sec:abs-construction-output}, we get the precision abstraction of $\varepsilon=6.100$. Compared to the results found in \cite{ArmanTCoN}, the derived $\varepsilon$ is large. However, one must take into account the possible stabilizing effect of disturbance on the dynamics in (\ref{eq_1}) can enlarge the derived $\bar{\tau}_s$ and a more thorough study is due in this regard. In Figure \ref{fig_1}, the derived lower and upper bounds are depicted. It is evident that the derived $\ubar{\tau}_s$ compared to the MIET are less conservative and can be effectively used for scheduling.

\begin{figure}
\centering
\includegraphics[width=1\columnwidth]{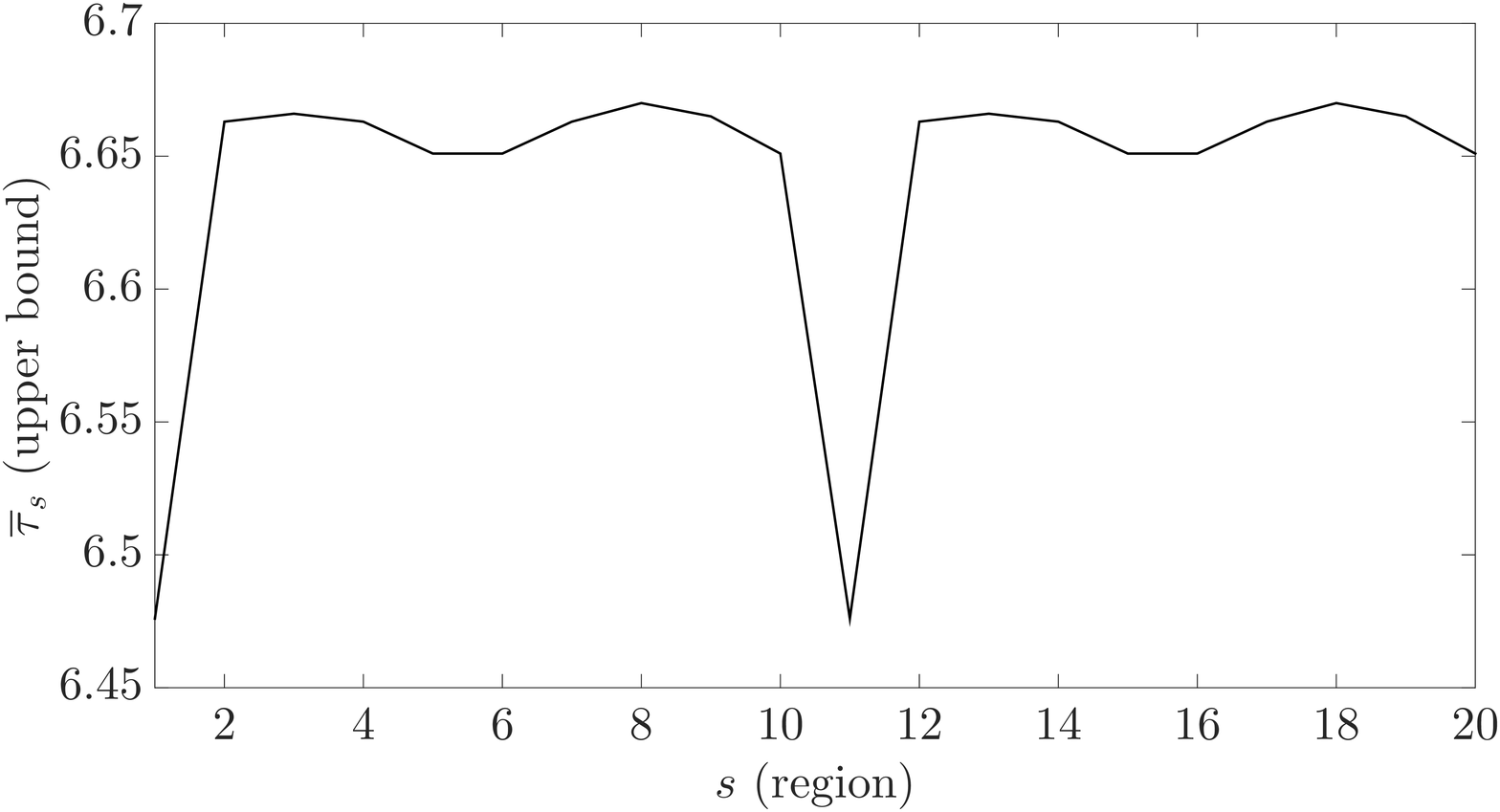}
\includegraphics[clip=true,width=1\columnwidth]{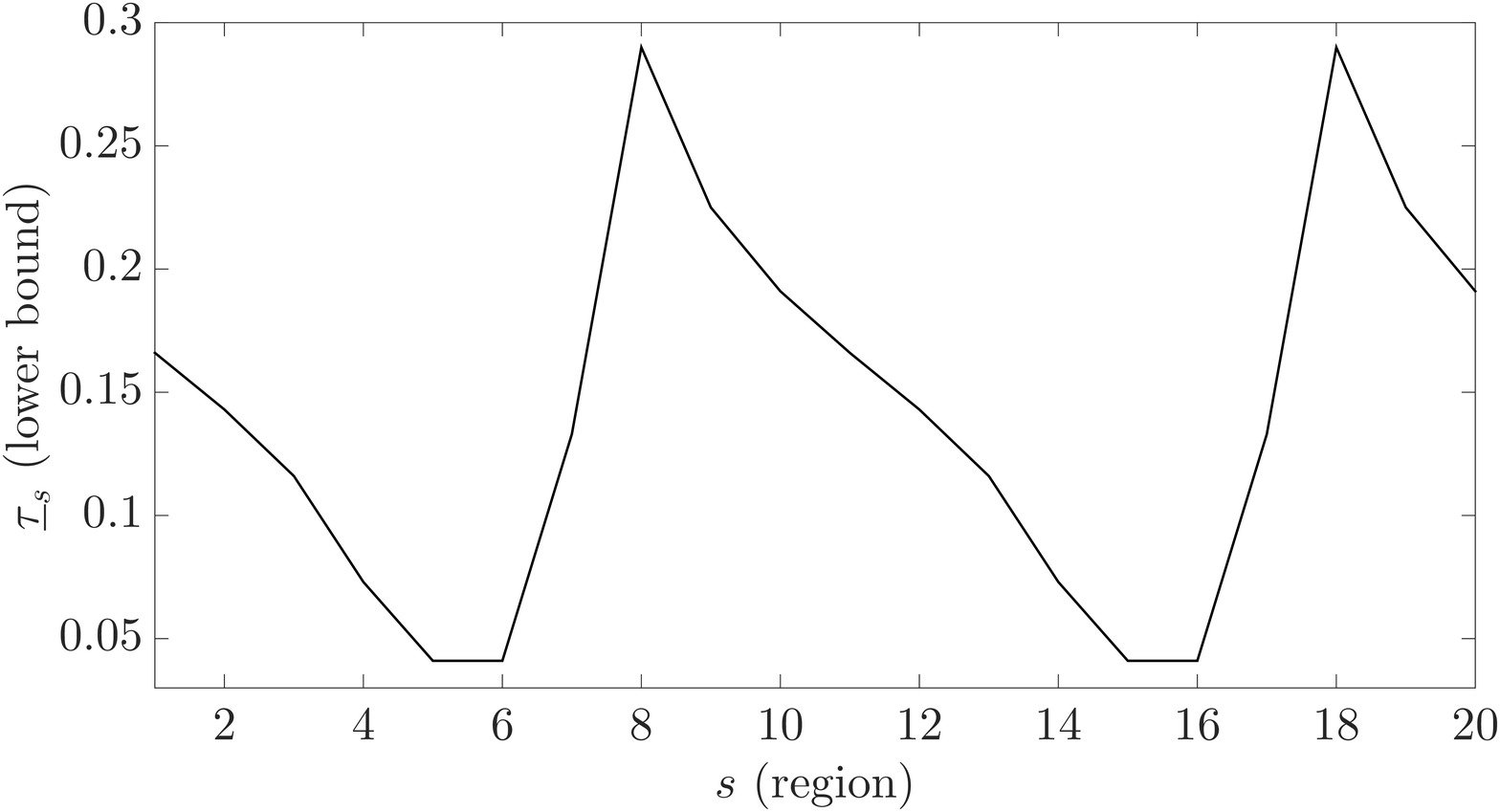}
\caption{(Top plot) Upper bounds on regional inter-sample times. (Bottom plot) Lower bounds on regional inter-sample times.}
\label{fig_1}
\end{figure}

Figure \ref{fig_2} represents the conic regions $s$ and the associated $\ubar{\tau}_s$ and $\bar{\tau}_s$ (note that in order to show the lower bounds in a clear manner the lower and upper bounds are depicted, separately).

\begin{figure}[thpb]
\centering
\includegraphics[width=1\columnwidth]{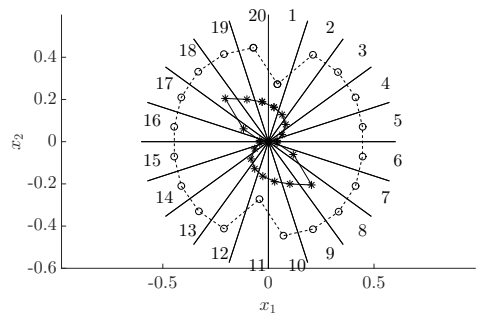}
\caption{The radial distance from the origin of each asterisk indicates the regional lower bound of the indexed cone. Furthermore, in the case of circles, the distance indicates the regional upper bound of the indexed cone minus $6.2\sec$, i.e.\ $\bar{\tau}_s-6.2\sec$ (for the sake of clarity of the figure).}
\label{fig_2}
\end{figure}

Moreover, Figure~\ref{fig_3} represents the simulation of the control system for a simulation time of $15\sec$. It is clear that the bounds derived by the analysis given in Section \ref{sec:abs-construction-output} have been respected by the sampling periods generated by the control system.

\begin{figure}[thpb]
\centering
\includegraphics[width=1\columnwidth]{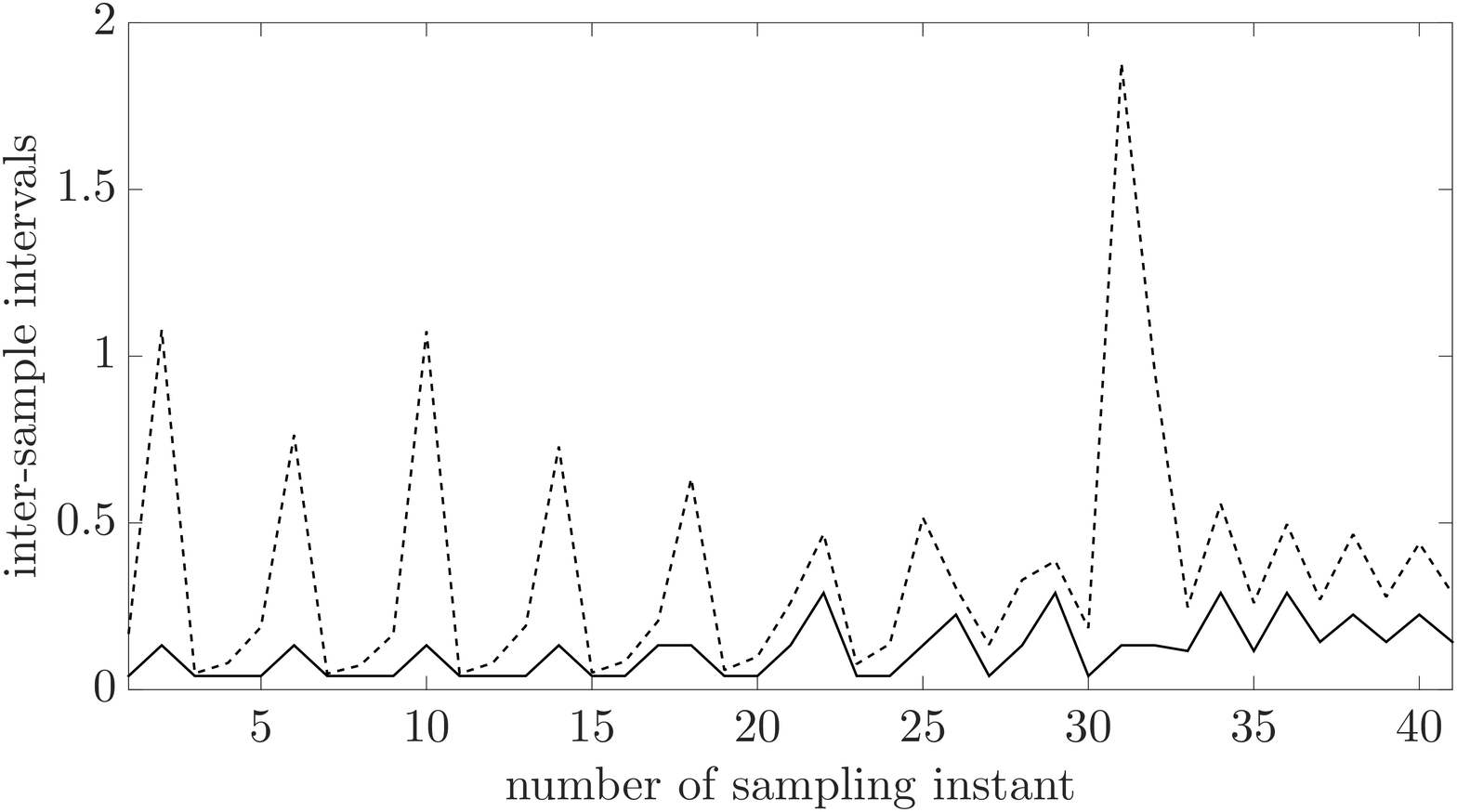}
\caption{Schematic representation of validation of lower bounds during the simulation period, the solid line (dashed line) represents the lower bounds on inter-sample intervals  (generated inter-sample intervals during simulation).}
\label{fig_3}
\end{figure}

Figure~\ref{fig_4} depicts the result of applying the procedure introduced in Section \ref{sec:abs-construction-trans}. 

\begin{figure}[thpb]
\centering
\includegraphics[width=1\columnwidth]{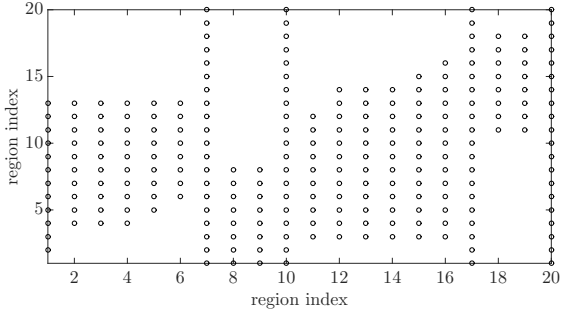}
\caption{Schematic representation of set of edges in the timed automaton generated by \eqref{eqn:lti-ex-1}-\eqref{eqn:lti-ex-2}. A circle at the coordinate $(i,j)$ denotes an edge from location $i$ to location $j$.}
\label{fig_4}
\end{figure}

\section{Conclusions and Future Work}
We have presented an approach to capture the sampling behavior of perturbed LTI systems with $\mathcal{L}_2$-based triggering mechanisms by timed safety automata. It has been shown that the derived timed automaton $\varepsilon$-approximately simulates the ETC system. The main contribution of this study falls into the subject of synthezing scheduling policies for ETC feedback loops using timed automata. Because of the inherent robustness of ETC strategies to perturbations, they are more appealing in practical applications compared to STC strategies. However, most of the existing ETC strategies are equipped with the quantity of minimum inter-sample time that indicates the maximum utilization of communication bandwidth. Despite the fact that such quantity can practically be used in scheduling of ETC feedback loops, it does not enable the full exploitation of the beneficiary features of ETC strategies in scheduling feedback policies. In fact, using solely such quantity in scheduling policies results in TDC-like techniques. Furthermore, in most of decentralized ETC strategies in the literature, the existence of a minimum inter-sample time among different subsystems is absent, see e.g.\ \cite{donkers2012output}. Exploiting the already existing tools for the synthesis of timed automata, one can further extend the results of this study to synthesize conflict-free policies in WNCS's, see e.g.\ \cite{Arman2015} which proposed a centralized scheduling of feedback policies. Another promising direction to follow is to find a fully decentralized approach instead of the centralized approach proposed in \cite{Arman2015}. Moreover, a modification that can be made to the analysis given in this study is related to the regional upper bounds derived in Section \ref{sec:abs-construction-ta}. It has been observed that these bounds are relatively large since perturbations can have a stabilizing effect on the error dynamics and as a result cause the enlargement of inter-sample times. Since these large upper bounds can possibly result in a timed automaton with a large number of transitions, using these timed automata may suffer from scalability issues for scheduling purposes. Therefore, one may assume to arbitrate an upper bound on the triggering mechanism to facilitate the scheduling process. Considering the case of multiple WNCS's, this type of assumption is closely related to periodic-ETC strategies, see e.g.\ \cite{heemels2013periodic}, and the assumed bound can be seen as the network heartbeat forcing ETC feedback loops updates.

\section{ACKNOWLEDGMENTS}

The authors gratefully appreciate the fruitful discussions with Dieky Adzkiya and Nikolaos Kekatos.

\appendix

\textbf{Lemma \ref{lem3}:} Assume the time interval $[0,\sigma]$ is divided to $l$ subintervals. This step is related to the reduction of conservatism in polytopic embedding. Let $t \in [0,\sigma]$ be an instant such that it satisfies $j\frac{\sigma}{l}\leq t < (j+1)\frac{\sigma}{l}$ where $j \in \{0,\dots,l-1 \}$ and $t=t'+j\frac{\sigma}{l}$ ($t' \in [0,\chi]$, with $\chi =\frac{\sigma}{l}$ for $j< \lfloor \frac{\ubar{\tau}_s l}{\sigma} \rfloor$ and $\chi=\ubar{\tau}_s-j\frac{\sigma}{l}$ otherwise). Denote $\Lambda(t)-I$ by $\mathcal{X}(t)$. One has:
\begin{equation}
\label{eq_18}
\begin{split}
\mathcal{X}(t)=&\left[\int_0^{j\frac{\sigma}{l}} e^{As}ds+\right.\\
&\left.\int_0^{t'}e^{As}ds(A\int_0^{j\frac{\sigma}{l}}e^{As}ds+I)\right](A-BK).
\end{split}
\end{equation}

Rewrite (\ref{eq_18}) into a more compact form as follows:
\begin{equation}
\label{eq_19}
\mathcal{X}(t)=\check{\Pi}_j+\int_0^{t'}e^{As}ds \hat{\Pi}_j
\end{equation}
where
\begin{equation}
\label{eq_20}
\begin{array}{ll}
\check{\Pi}_j=\check{F}_j(A-BK),&\hat{\Pi}_j=\hat{F}_j(A-BK)\\
\check{F}_j=\int_0^{j\frac{\sigma}{l}} e^{As}ds,&\hat{F}_j=A\check{F}_j+I.
\end{array}
\end{equation}

Substitute (\ref{eq_19}) into (\ref{eq_17_123}), it follows:
\begin{equation}
\label{eq_22_1}
\begin{array}{ll}
\ubar{\Phi}_{11}(t)&=\check{\Pi}^T_j M \check{\Pi}_j+\check{\Pi}^T_j M (\int_0^{t'}e^{As}ds) \hat{\Pi}_j\\
&+\hat{\Pi}_j^T(\int_0^{t'}e^{As}ds)^T M \check{\Pi}_j\\
&+\hat{\Pi}_j^T(\int_0^{t'}e^{As}ds)^T M (\int_0^{t'}e^{As}ds) \hat{\Pi}_j\\
&+tW\mu \lambda_{\max}(E^TE)d_A(t)I- N,\\
\ubar{\Phi}_{12}(t)&=\check{\Pi}^T_j M+\hat{\Pi}_j^T(\int_0^{t'}e^{As}ds)^T M,\\
\ubar{\Phi}_{21}(t)&=\ubar{\Phi}_{12}^T(t),\\
\ubar{\Phi}_{22}(t)&=-\Psi.
\end{array}
\end{equation}

Now, we use the polytopic embedding approach proposed by \cite{hetel2006stabilization} to abstract away $t$ in (\ref{eq_17_123}). In the polytopic embedding approach, the underlying idea is as follows. First, we approximate the matrix functionals $tW\mu \lambda_{\max}(E^TE)d_A(t)I$ and $\ubar{\Phi}$ by their $N_{\text{conv}}$-th order Taylor series expansions. Note that one has:
\begin{equation}
\label{eq_23}
\int_0^{t'}e^{As}ds=\Sigma_{i=1}^{N_{\text{conv}}} \frac{A^{i-1}}{i!} (t')^i.
\end{equation}

Followed by these approximations, we take into account the introduced error and call the upper bound on this error $\eta$. The procedure to find $\eta$ follows. The exact Taylor expansion of $\Phi(t)$ is given by $\Sigma_{k=0}^{\infty} \hat{\ubar{\Phi}}_{k,j}(t')^k$ where $\hat{\ubar{\Phi}}_{k,j}$ is given in (\ref{eq_24_1}). However, in our analysis the $N_{\text{conv}}$-th order expansion of $\Phi(t)$, that is $\tilde{\ubar{\Phi}}_{(N_{\text{conv}},j)}(t')=\Sigma_{k=0}^{N_{\text{conv}}} \hat{\ubar{\Phi}}_{k,j}(t')^k$ has been used. Now, assume the error introduced by the approximation is $\ubar{R}_{(N_{\text{conv}},j)}(t')=\Phi(t)-\tilde{\ubar{\Phi}}_{(N_{\text{conv}},j)}(t')$ which happens to be a symmetric matrix. Hence, one is able to derive an upper bound on $\ubar{R}_{(N_{\text{conv}},j)}(t')\leq \eta I$ where the scalar $\eta$ is given by (\ref{eq_33}). It follows that $\tilde{\ubar{\Phi}}_{(N_{\text{conv}},j)}(t')+ \eta I \preceq 0$ implies $\Phi(t)\preceq 0$. Based on the fact that $\tilde{\ubar{\Phi}}_{(N_{\text{conv}},j)}(\cdot)+ \eta I$ is a polynomial function, one is able to use the convex embedding technique in \cite{hetel2006stabilization} to show that $\ubar{\Phi}_{(i,j),s} \preceq 0,\; \forall (i,j) \in \mathcal{K}_s=(\{0,\dots,N_{\text{conv}} \} \times \{0,\dots,\lfloor \frac{\ubar{\tau}_s l}{\sigma} \rfloor \})$, with $\ubar{\Phi}_{(i,j),s} = \sum_{k=0}^i L_{k,j} \chi^{k}+\eta I$ implies $ (\ubar{\tilde{\Phi}}_{N_{\text{conv}},j}(\sigma ')+\eta I)\preceq 0$ and as a result $\Phi(t) \preceq 0,\; \forall t \in [0,\ubar{\tau}_s]$.

\textbf{Theorem \ref{them3}:} Consider scalars $\ubar{\alpha}_{(i,j),s}$ for $n=2$ (or matrices $\ubar{U}_{(i,j),s}$ for $n\geq3$) satisfying LMI conditions given in (\ref{eq_them_un}) for $s\in \{1,\dots,q\}$. By the virtue of Schur complement and Lemma \ref{lem3}, it follows that $\Phi(t)+\ubar{\alpha}_{(i,j),s}Q_s\preceq0$ for $n=2$ (or $\Phi(t)+E_s^T\ubar{U}_{(i,j),s}E_s\preceq0$ for $n\geq3$). Then, since $\forall x\in \mathcal{R}_s$, $\{x\in \mathbb{R}^2|\;x^T Q_s x\geq0 \}$ for $n=2$ (or $\{x\in \mathbb{R}^n|\; E_s x\geq0 \}$ for $n\geq3$), the S-procedure implies that $x^T \Phi(t)x\leq 0$, $\forall t\in [0,\ubar{\tau}_s]$. Finally, Theorem \ref{theor_1} guarantees that $\forall x\in \mathcal{R}_s$, the inter-sample time $\tau(x)$ is lower bounded by $\ubar{\tau}_s$.

\textbf{Lemma \ref{lemm_u1}} and \textbf{Theorem \ref{them4}:} A sketch of proof is given. The polytopic embedding according to time of $-\Phi(t)$ enable us to show that $-\Phi(t)\preceq0$ (or $\Phi(t)\succeq0$) if $\bar{\Phi}_{\kappa,s} \preceq 0$, $\forall \kappa \in \mathcal{K}_s$. Then, by applying the Schur complement on $-\Phi(t)$, it follows $-\Theta(t)\preceq 0$ (or $\Theta(t)\succeq 0$) and as a result $-x^T \Theta(t)x\leq 0$. Furthermore, considering (\ref{eq_15}) in Theorem \ref{theor_1}, i.e.\ $-\mathcal{F}_{\omega}(x,t)\geq -x^T \Theta(t)x$, the claims in Lemma~\ref{lemm_u1} and Theorem~\ref{them4} follow.

\bibliographystyle{IEEEtran}
\bibliography{myjabref_comp}

\begin{thebibliography}{10}
\providecommand{\url}[1]{#1}
\csname url@samestyle\endcsname
\providecommand{\newblock}{\relax}
\providecommand{\bibinfo}[2]{#2}
\providecommand{\BIBentrySTDinterwordspacing}{\spaceskip=0pt\relax}
\providecommand{\BIBentryALTinterwordstretchfactor}{4}
\providecommand{\BIBentryALTinterwordspacing}{\spaceskip=\fontdimen2\font plus
\BIBentryALTinterwordstretchfactor\fontdimen3\font minus
  \fontdimen4\font\relax}
\providecommand{\BIBforeignlanguage}[2]{{%
\expandafter\ifx\csname l@#1\endcsname\relax
\typeout{** WARNING: IEEEtran.bst: No hyphenation pattern has been}%
\typeout{** loaded for the language `#1'. Using the pattern for}%
\typeout{** the default language instead.}%
\else
\language=\csname l@#1\endcsname
\fi
#2}}
\providecommand{\BIBdecl}{\relax}
\BIBdecl

\bibitem{Tabuada}
P.~Tabuada, ``Event-triggered real-time scheduling of stabilizing control
  tasks,'' \emph{{IEEE} Trans. Autom. Control}, vol.~52, no.~9, pp. 1680--1685,
  Sep. 2007.

\bibitem{Velasco}
M.~Velasco, J.~Fuertes, and P.~Marti, ``The self triggered task model for
  real-time control systems,'' in \emph{Work-in-Progress Session of the 24th
  IEEE Real-Time Systems Symposium}, vol. 384, 2003.

\bibitem{dijkstra1982role}
E.~W. Dijkstra, ``On the role of scientific thought,'' in \emph{Selected
  writings on computing: a personal perspective}.\hskip 1em plus 0.5em minus
  0.4em\relax Springer, 1982, pp. 60--66.

\bibitem{Buttazzo1998}
G.~Buttazzo, G.~Lipari, and L.~Abeni, ``Elastic task model for adaptive rate
  control,'' in \emph{Proc. 19th IEEE Real-Time Systems Symposium}, Dec. 1998,
  pp. 286--295.

\bibitem{Caccamo2000}
M.~Caccamo, G.~Buttazzo, and L.~Sha, ``Elastic feedback control,'' in
  \emph{Proc. 12th Euromicro Conference on Real-Time Systems}, 2000, pp.
  121--128.

\bibitem{Lu2002}
C.~Lu, J.~Stankovic, S.~Son, and G.~Tao, ``Feedback control real-time
  scheduling: Framework, modeling, and algorithms,'' \emph{Real-Time Systems},
  vol.~23, no. 1-2, pp. 85--126, 2002.

\bibitem{Cervin2004}
A.~Cervin and J.~Eker, ``Control-scheduling codesign of real-time systems: The
  control server approach,'' \emph{Journal of Embedded Computing}, vol.~1,
  no.~2, pp. 209--224, 2004.

\bibitem{Bhattacharya2004}
R.~Bhattacharya and G.~Balas, ``Anytime control algorithm: Model reduction
  approach,'' \emph{Journal of Guidance, Control, and Dynamics}, vol.~27,
  no.~5, pp. 767--776, 2004.

\bibitem{Fontanelli2008_short}
D.~Fontanelli, L.~Greco, and A.~Bicchi, ``Anytime control algorithms for
  embedded real-time systems,'' in \emph{Hybrid Systems: Computation and
  Control}, 2008, pp. 158--171.

\bibitem{AlAreqi2013}
S.~Al-Areqi, D.~Gorges, S.~Reimann, and S.~Liu, ``Event-based control and
  scheduling codesign of networked embedded control systems,'' in \emph{Proc.
  Amer. Control Conf.}, Jun. 2013, pp. 5299--5304.

\bibitem{AlAreqi2014}
S.~Al-Areqi, D.~Gorges, and S.~Liu, ``Stochastic event-based control and
  scheduling of large-scale networked control systems,'' in \emph{Proc.
  European Control Conf.}, Jun. 2014, pp. 2316--2321.

\bibitem{Arman2015}
A.~Sharifi~Kolarijani, D.~Adzkiya, and M.~Mazo~Jr., ``Symbolic abstractions for
  the scheduling of event-triggered control systems,'' in \emph{Proc. 54th IEEE
  Conference on Decision and Control}, Dec 2015, pp. 6153--6158.

\bibitem{ArmanTCoN}
A.~Sharifi~Kolarijani and M.~Mazo~Jr, ``A formal traffic characterization of
  {LTI} event-triggered control systems,'' \emph{to appear in IEEE Trans.
  Control of Network Systems, arXiv:1503.05816}, 2015.

\bibitem{alur1996timing}
R.~Alur and R.~Kurshan, ``Timing analysis in cospan,'' \emph{Hybrid Systems
  III}, pp. 220--231, 1996.

\bibitem{henzinger1994symbolic}
T.~A. Henzinger, X.~Nicollin, J.~Sifakis, and S.~Yovine, ``Symbolic model
  checking for real-time systems,'' \emph{Information and computation}, vol.
  111, no.~2, pp. 193--244, 1994.

\bibitem{alur1994theory}
R.~Alur and D.~L. Dill, ``A theory of timed automata,'' \emph{Theoretical
  computer science}, vol. 126, no.~2, pp. 183--235, 1994.

\bibitem{larsen1997uppaal}
K.~G. Larsen, P.~Pettersson, and W.~Yi, ``Uppaal in a nutshell,''
  \emph{International Journal on Software Tools for Technology Transfer},
  vol.~1, no.~1, pp. 134--152, 1997.

\bibitem{WangLem2009}
X.~Wang and M.~D. Lemmon, ``Self-triggered feedback control systems with
  finite-gain {$\mathcal{L}_2$} stability,'' \emph{IEEE Trans. Autom. Control},
  vol.~54, no.~3, pp. 452--467, 2009.

\bibitem{GuKhCh2003}
K.~Gu, J.~Chen, and V.~L. Kharitonov, \emph{Stability of time-delay
  systems}.\hskip 1em plus 0.5em minus 0.4em\relax Springer Science \& Business
  Media, 2003.

\bibitem{Loan1977}
C.~Van~Loan, ``The sensitivity of the matrix exponential,'' \emph{SIAM Journal
  on Numerical Analysis}, vol.~14, no.~6, pp. 971--981, 1977.

\bibitem{ewald2012combinatorial}
G.~Ewald, \emph{Combinatorial convexity and algebraic geometry}.\hskip 1em plus
  0.5em minus 0.4em\relax Springer Science \& Business Media, 2012, vol. 168.

\bibitem{Tabuada2009}
P.~Tabuada, \emph{Verification and Control of Hybrid Systems: A Symbolic
  Approach}.\hskip 1em plus 0.5em minus 0.4em\relax Springer London, Limited,
  2009.

\bibitem{Alur1994}
R.~Alur and D.~Dill, ``A theory of timed automata,'' \emph{Theoretical Computer
  Science}, vol. 126, no.~2, pp. 183--235, 1994.

\bibitem{Fiter2012}
C.~Fiter, L.~Hetel, W.~Perruquetti, and J.-P. Richard, ``A state dependent
  sampling for linear state feedback,'' \emph{Automatica}, vol.~48, no.~8, pp.
  1860--1867, 2012.

\bibitem{hetel2006stabilization}
L.~Hetel, J.~Daafouz, and C.~Iung, ``Stabilization of arbitrary switched linear
  systems with unknown time-varying delays,'' \emph{Automatic Control, IEEE
  Transactions on}, vol.~51, no.~10, pp. 1668--1674, 2006.

\bibitem{Chutinan1998}
A.~Chutinan and B.~Krogh, ``Computing polyhedral approximations to flow pipes
  for dynamic systems,'' in \emph{Proc. 37th IEEE Conference on Decision and
  Control}, vol.~2, Dec. 1998, pp. 2089--2094.

\bibitem{donkers2012output}
M.~Donkers and W.~Heemels, ``Output-based event-triggered control with
  guaranteed-gain and improved and decentralized event-triggering,'' \emph{IEEE
  Trans. Autom. Control}, vol.~57, no.~6, pp. 1362--1376, 2012.

\bibitem{heemels2013periodic}
W.~Heemels, M.~Donkers, and A.~R. Teel, ``Periodic event-triggered control for
  linear systems,'' \emph{IEEE Trans. Autom. Control}, vol.~58, no.~4, pp.
  847--861, 2013.

\end{thebibliography}

\end{document}